\newcommand{\beq}{\begin{equation}}
\newcommand{\eeq}{\end{equation}}
\newcommand{\bqa}{\begin{eqnarray}}
\newcommand{\eqa}{\end{eqnarray}}
\newcommand{\nn}{\nonumber}
\newcommand{\bra}[1]{ \langle{#1} |}
\newcommand{\ket}[1]{ |{#1} \rangle}
\newcommand{\sq}[1]{\left[ {#1} \right]}
\newcommand{\tr}[1]{{\rm Tr}\sq{ {#1} }}
\newcommand{\mf}{\mathbf}
\definecolor{maroon}{rgb}{0.7,0,0}
\definecolor{ngreen}{rgb}{0.3,0.7,0.3}
\definecolor{golden}{rgb}{0.8,0.6,0.1}
\definecolor{npurple}{rgb}{0.3,0,0.6}
\newcommand{\ketbra}[2]{|#2\rangle\langle#1|}
\newcommand{\be}{\begin{equation}}
\newcommand{\ee}{\end{equation}}
\newcommand{\ba}{\begin{eqnarray}}
\newcommand{\ea}{\end{eqnarray}}
\newcommand{\one}{\bf{1}}
\newcommand{\etal}{{\it{et. al. }}}
\newtheorem{observation}{Observation}
\newtheorem{definition}{Definition}
\newtheorem{thm}{Theorem}
\newtheorem{cor}{Corollary}
\newtheorem{lem}{Lemma}
\newtheorem{prop}{Proposition}
\newtheorem{result}{Result}
\begin{document}

\title{Operational  simultaneous correlations in complementary bases of bipartite states via one-sided semi-device-independent steering}  

\author{Chellasamy Jebarathinam}

 \affiliation{Department of Physics and Center for Quantum Information Science, National Cheng Kung University, Tainan 701, Taiwan (R.O.C.)}

  \affiliation{Physics Division, National Center for Theoretical Sciences, National Taiwan University, Taipei 106319, Taiwan (R.O.C.)}

 \author{Debarshi Das}
 \email{dasdebarshi90@gmail.com}
\affiliation{Department of Physics and Astronomy, University College London, Gower Street, WC1E 6BT London, England, United Kingdom} 

\author{Huan-Yu Ku}
\affiliation{Department of Physics, National Taiwan Normal University, Taipei 116059, Taiwan (R.O.C.)}

\author{Debasis Sarkar}
 \email{dsappmath@caluniv.ac.in}
\address{Department of Applied Mathematics, University of Calcutta, 92, A.P.C. Road, Kolkata, 700009, India}

\author{Hsi-Sheng Goan}

 \email{goan@phys.ntu.edu.tw}
 \affiliation{Department of Physics and Center for Theoretical Physics,
National Taiwan University, Taipei 106319, Taiwan (R.O.C.)}
\affiliation{Center for Quantum Science and Engineering, National Taiwan University, Taipei 106319, Taiwan (R.O.C.)}
 \affiliation{Physics Division, National Center for Theoretical Sciences, National Taiwan University, Taipei 106319, Taiwan (R.O.C.)}

\begin{abstract}
 Recently, a different form of quantum steering, i.e., certification of quantum steering in a one-sided semi-device-independent way, has been formulated [Jebarathinam \etal Phys. Rev. A 108, 042211 (2023)]. In this work, we use this phenomenon to provide operational simultaneous correlations in mutually unbiased bases as quantified by the measures in [Wu \etal Scientific Reports 4, 4036
(2014)]. 
First, we show that for any bipartite state, such a measure of simultaneous correlations in two mutually unbiased bases can be operationally identified as exhibiting one-sided semi-device-independent steering in a two-setting scenario.
 Next, we demonstrate that for two-qubit Bell-diagonal states, quantifying one-sided semi-device-independent steerability provides an operational quantification of information-theoretic quantification of simultaneous correlations in mutually unbiased bases. Then,  we provide a different classification of two-qubit separable states with the above-mentioned information-theoretic quantification of simultaneous correlations in mutually unbiased bases and the quantification of one-sided semi-device-independent steerability. Finally, we invoke quantum steering ellipsoid formalism to shed intuitions on the operational characterization of simultaneous correlations in complementary bases of two-qubit states via one-sided semi-device-independent steerability. This provides us with a geometric visualization of the results. 
\end{abstract}
\pacs{03.65.Ud, 03.67.Mn, 03.65.Ta}

\maketitle 
   \date{\today}
   
\section{Introduction}
Measurements in mutually unbiased bases (MUBs) or complementary bases  have been invoked as fundamental resources for optimally observing or quantifying quantum nonlocality \cite{TFR+21, SNC14}.
Quantum steering is a form of nonlocal correlation introduced by  Schr\"{o}dinger \cite{Sch35}. Quantum steering can be demonstrated by invoking the presence of simultaneous correlations in complementary bases of the given composite system \cite{JKK+18}. 
Quantum steering has been formulated as a one-sided device-independent (1SDI) certification of relevant resources (incompatible measurements and entanglement in the state) used to demonstrate the phenomenon \cite{WJD07,UBG+15,KCB+18,DDJ+18,KHCC+22,Wang2024,LLL+24,KHB24}. Quantum information protocols in 1SDI scenario rely on such certification \cite{BCW+12, PCS+15, GA15, LJW+21,SBJ+23,Wang2024,HKB23,LLL+24}. 

Recently, it has been formulated that quantum steering can also be observed with a different level of certification, i.e., in a one-sided semi-device-independent (1SSDI) way, that further  constrains the dimension of the devices at the untrusted side \cite{JDS_PRA23}. Such a different form of steering occurs in states that are indeed unsteerable in a 1SDI scenario. In fact, these states have superunsteerability as formulated in Refs. \cite{DBD+18,DJB+18}. 
Interestingly, superunsteerability also occurs for certain separable states with a nonvanishing quantum discord \cite{OZ01,HV01} as well as global coherence \cite{JKC+24}. 
 Superunsteerability detects the 1SSDI steerable resources, which are unsteerable boxes in 1SDI scenario. Although any unsteerable box is not a resource in the resource theory of 1SDI steering \cite{GA15, DDJ+18}, superunsteerability is a resource in the context of 1SSDI steering due to constraining the dimension at the untrusted side.
Superunsteerability has been proved to be  useful for quantum information processing \cite{JDK+19, JD23}.
In Ref. \cite{JDK+19}, the quantification of superunsteerability, called Schr\"{o}dinger strength,  was introduced. Moreover, in
 Ref. \cite{JKC+24} a resource theory of 1SSDI steering was formulated. Utilizing this resource theory, for the quantum communication task of remote state preparation using separable states \cite{DLM+12}, the precise resource underpinning this task has been identified to be the 1SSDI steerability  with a nonvanishing Schr\"{o}dinger strength.

Apart from steering, in Ref. \cite{WMC+14}, information-theoretic quantities have been introduced to quantify quantum correlation in MUBs directly. Here, MUBs are restricted to those that optimize classical correlation in the definition of quantum discord. The above information-theoretic measures can also reveal simultaneous correlations in mutually unbiased bases (SCMUB) in states that do not violate a Bell inequality or steering inequality but have a nonvanishing quantum discord. 

The present work is motivated by whether the presence of SCMUB in discord can be identified by an operational phenomenon.  
To this end, we provide the operational characterization of SCMUB  via 1SSDI steering. We demonstrate that simultaneous correlations in two MUBs for any bipartite state can be operationally identified as exhibiting 1SSDI steering in a two-setting scenario. To demonstrate the above results, we specialize in bipartite two-qubit states as explicit examples and obtain the following results. 
For a broad class of two-qubit Bell-diagonal states,  we study the relationships between the quantification of 1SSDI steerability, based on Schr\"{o}dinger strength introduced in Ref. \cite{JDK+19} with two (three) setting scenario, and information-theoretic measures of SCMUB  \cite{WMC+14}. Such relationships reveal that quantifying 1SSDI steerability of Bell-diagonal states provides an operational quantification of simultaneous correlations in two and three MUBs, respectively   \cite{WMC+14}. We then provide a different classification of two-qubit discordant states in terms of simultaneous correlations in two and three MUBs and 1SSDI steerability.
Finally, we invoke quantum steering ellipsoid formulation to shed intuitions on the operational SCMUB of two-qubit states via 1SSDI steerability. This gives us a geometric visualization of our results.

\section{Preliminaries}

\subsection{Quantum discord and bipartite coherence}
Quantum discord \cite{HV01, OZ01} is a measure of the quantumness of a bipartite state $\rho_{AB} \in \mathcal{B}(\mathcal{H}_A \otimes \mathcal{H}_B)$. Here $\mathcal{B}(\mathcal{H}_A \otimes \mathcal{H}_B)$ stands for the set of all bounded linear operators acting on the Hilbert space $\mathcal{H}_A \otimes \mathcal{H}_B$.
 Before we introduce the definition of quantum discord, we present the classical correlation used in the definition of quantum discord. 
For a given basis $\{\Pi_{i}^{A}\}$ measured on Alice's subsystem of a bipartite state $\rho_{AB}$,  the Holevo quantity 
$\chi(\rho_{AB}|\{\Pi_{i}^{A}\})$ of the ensemble $\{p_{i},\rho_{B|i}\}$
prepared on Bob's side is given by 
\begin{align}
\chi(\rho_{AB}|\{\Pi_{i}^{A}\}) & =S\big(\rho_B\big)- \sum_{i}p_{i}S\big(\rho_{B|i}\big),
\end{align}
which gives an upper bound on Bob's accessible information about Alice's results. \(S(\sigma) = -\mbox{tr}(\sigma \log_2 \sigma)\) is the von Neumann entropy of a density matrix \(\sigma\).
\begin{definition}
   The classical correlation $C^{\rightarrow}(\rho_{AB})$ is defined by  the Holevo quantity $\chi(\rho_{AB}|\{\Pi_{i}^{A}\})$ optimized over all projective measurements,
i.e.,
\begin{equation}\label{s4m}
C^{\rightarrow}(\rho_{AB}):=\max_{\{\Pi_{i}^{A}\}}\chi(\rho_{AB}|\{\Pi_{i}^{A}\}).
\end{equation} 
\end{definition}

Using the classical correlation of a bipartite state defined above, quantum discord  \cite{HV01, OZ01} from Alice to Bob $D^{\rightarrow}(\rho_{AB})$  can be defined as follows.
\begin{definition}
Quantum discord is defined as
    \begin{align}\label{QDdef}
D^{\rightarrow}(\rho_{AB}):= I(\rho_{AB}) - C^{\rightarrow}(\rho_{AB}).
\end{align}            
Here $I(\rho_{AB})= 
 S(\rho_B) - \tilde{S}(\rho_{B|A})$,
is the quantum mutual information and can be interpreted as the total correlations in \(\rho_{AB}\).  
 \(\tilde{S}(\rho_{B|A}) = S(\rho_{AB}) - S(\rho_A)\) is the ``unmeasured'' quantum conditional entropy \cite{qmi} (see also \cite{Cerf, SN96,GROIS}).
\end{definition}

Quantum discord $D^{\rightarrow}(\rho_{AB})$ as defined above captures the quantumness of correlations in the state from Alice to Bob. Similarly, the quantum discord from Bob to Alice $D^{\leftarrow}(\rho_{AB})$ can be defined.  
$D^{\rightarrow}(\rho_{AB})$  vanishes for a given $\rho_{AB}$ if and only if it is a classical-quantum (CQ) state of the form,
\begin{align}
\rho_{\texttt{CQ}}=\sum_i p_i \ket{i} \bra{i}_A \otimes \rho^{(i)}_B,
\end{align}
where $p_i\ge 0$ and $\sum_i p_i=1$, $\{\ket{i}\}$ forms an orthonormal basis on Alice's Hilbert space and $\rho^{(i)}_B$ are any quantum states on Bob's Hilbert space.

Quantumness, as captured by discord, implies another notion of quantumness called bipartite coherence.
To define incoherent (IC) bipartite states, let us first fix the product basis $\{\ket{i}^A \otimes \ket{j}^B\}$ as the reference basis. 
\begin{definition} \label{bc}
With respect to the reference basis $\{\ket{i}^A \otimes \ket{j}^B\}$, the bipartite IC states  are defined as those whose density matrix is diagonal in such a basis \cite{BCA15,SSD+15}: 
\be \label{IC}
\rho^{IC}_{AB}=\sum_k p_k \sigma^A_k \otimes \tau^B_k, 
\ee
where $p_k$ are probabilities; $\sigma^A_k$ and $\tau^B_k$ are incoherent states on subsystems $A$ and $B$ with respect to the bases  $\{\ket{i}^A \}$ and $\{\ket{j}^B\}$ respectively,
 i.e., $\sigma^A_k=\sum_i p'_{ik} \ketbra{i}{i}$ and $\tau^B_k=\sum_j p''_{jk} \ketbra{j}{j}$ with $p'_{ik}$ and $p''_{jk}$ being the probabilities. Otherwise, it is said to have \textit{bipartite coherence}.  
 \end{definition}
 Biparite coherence can be quantified using any measure of coherence defined for single systems \cite{BCP14,SAP17}.

 Bipartite coherence is present in all discordant states for which one of the discords $D^{\rightarrow}(\rho_{AB})$ and $D^{\leftarrow}(\rho_{AB})$ is nonzero, or both discords are nonzero. In addition, certain product states, which are not discordant, also display bipartite coherence with respect to some fixed product basis.  This implies that quantum discord and quantum coherence are, in general, inequivalent. In fact, the relative entropy of quantum discord, which is nonvanishing for all discordant states,  is upper bounded by the relative entropy of quantum coherence \cite{YXL+15}. However, the basis-free coherence of correlations, defined as the relative entropy of coherence minimized over all local unitary transformations, is exactly equal to the relative entropy of quantum discord \cite{YXL+15}.

 To complete the relationship between coherence and discord, we review some more notions of coherence of correlations. In Refs. \cite{BKP+16, WYY+17, GG17, LZZ+24}, different measures of coherence of correlations in bipartite states have been studied.  In this context, let us note that correlated coherence of bipartite states  in a basis-dependent way \cite{BKP+16,WYY+17} was defined as the difference of coherence in the global state and the total  coherence in the subsystems. Such measures of coherence of correlations vanish on all product states, and moving to the nonproduct states, they are, however, not equivalent to any of the discords. See Ref. \cite{WYY+17} for the analysis of states with vanishing correlated coherence.

In Ref. \cite{JKC+24}, a different notion of coherence of correlations, called global coherence, was introduced. All product states with bipartite coherence have coherence locally. Moving to the nonproduct states,  consider all discordant states which can be created locally from a state with vanishing discord by a local operation, $\Phi_A \otimes \Phi_B$, where $\Phi$ is a completely positive trace-preserving (CPTP) map. All such discordant states have been identified in Ref. \cite{BPP15} in the context of local creation of discord \cite{SKB11,GLB+12, Gio13}. There are also nonproduct  states which are not discordant, but have bipartite coherence due to local coherence in a subsystem or both the subsystems \cite{YMG+16, WYY+17}. All bipartite coherent states whose coherence is due to local coherence in a subsystem or both the subsystems, or can be created locally due to the local creation of discord, are defined to have bipartite coherence locally \cite{JKC+24}. Thus, global coherence of bipartite states can be defined as follows.
\begin{definition}
A bipartite state in $\mathbb{C}^{d_A}\otimes\mathbb{C}^{d_B}$ has global coherence if and only if it has nonzero discord and its discord cannot be generated by a local operation, $\Phi_A \otimes \Phi_B$, from a state with vanishing discord both ways.
\end{definition}
According to this definition, all entangled states have global coherence because they are not separable to have local generation of bipartite coherence. Note that global coherence can also be present in separable states.

To identify separable states having global coherence,
 let us introduce the following decomposition of bipartite states \cite{DVB10}.
 Given a bipartite state $\rho_{AB}$, it can be decomposed as a sum of arbitrary bases of Hermitian operators $\{A_i\}$ and $\{B_i\}$ as 
\be
\rho_{AB} =\sum^{d^2_A}_{n=1}
\sum^{d^2_B}_{m=1} r_{nm} A_n \otimes B_m, 
\ee
where $d_A$ ($d_B$) is the dimension of the Hilbert space $A$
($B$). Using this representation, correlation matrix $R = (r_{nm})$ is defined, which can be rewritten using its singular value decomposition and cast in a diagonal representation as 
\begin{equation}\label{LR}
\rho_{AB}=\sum^{L_R}_{n=1} c_n S_n \otimes F_n. 
\end{equation}
Here, $L_R$ is the rank of $R$ and quantifies how many product operators are needed to represent $\rho_{AB}$.
The value of $L_R$ can be used to witness the presence of nonzero quantum discord \cite{DVB10}. For states with zero discord one-way or both ways, $L_R$ is bounded from above by the minimum dimension of the subsystems
$d_{\min} = \min\{d_A, d_B \}$. On the other hand, in general,
 the correlation rank is bounded by the square of $d_{\min}$:
$L_R \le d^2_{\min}$. Therefore, states with $L_R > d_{\min}$ will be necessarily discordant. However, as shown in Ref. \cite{BPP15}, for quantum states whose discord can be created by a local operation, $\Phi_A \otimes \Phi_B$,  $L_R \le d_{\min}$. 
We can now state the following observation. 
  \begin{observation}
     A separable state in $\mathbb{C}^{d_A}\otimes\mathbb{C}^{d_B}$ has global coherence if and only if $L_R$ in Eq. (\ref{LR}) satisfies $L_R > d_{\min}$.
 \end{observation}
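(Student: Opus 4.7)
The plan is to prove the two directions of the observation separately, relying on the two bounds on the correlation rank $L_R$ that have just been recalled from \cite{DVB10} and \cite{BPP15}.

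For the ``if'' direction I assume $L_R > d_{\min}$ and argue that the separable state $\rho_{AB}$ has global coherence. Since every state with vanishing discord both ways satisfies $L_R \leq d_{\min}$ by the bound from \cite{DVB10}, $\rho_{AB}$ must be discordant. Moreover, by the result of \cite{BPP15} quoted in the text, any bipartite state obtained from a zero-discord state by a product channel $\Phi_A \otimes \Phi_B$ also satisfies $L_R \leq d_{\min}$, so $\rho_{AB}$ cannot be produced in this way. Combining these two facts with the definition of global coherence gives the conclusion. Separability of $\rho_{AB}$ is not actually used in this direction.

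For the ``only if'' direction I prove the contrapositive: if the separable state $\rho_{AB}$ has $L_R \leq d_{\min}$, then it fails to have global coherence. If the discord already vanishes there is nothing to do. Otherwise I exhibit a zero-discord seed state $\sigma_{AB}$ and a product channel $\Phi_A \otimes \Phi_B$ with $(\Phi_A \otimes \Phi_B)(\sigma_{AB}) = \rho_{AB}$. Using the diagonal decomposition $\rho_{AB} = \sum_{n=1}^{L_R} c_n S_n \otimes F_n$ of Eq.~(\ref{LR}) and the hypothesis $L_R \leq d_{\min}$, one takes $\sigma_{AB}$ to be a classical--classical state supported on an $L_R$-dimensional diagonal subspace of $\mathcal{H}_A \otimes \mathcal{H}_B$ and defines $\Phi_A, \Phi_B$ so that its classical labels are mapped onto the local constituents of a separable decomposition of $\rho_{AB}$. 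This is precisely the converse construction in \cite{BPP15}, which together with separability guarantees that the discord is locally creatable from a zero-discord seed.

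The main obstacle is this explicit ($\Rightarrow$) construction, because the $S_n$ and $F_n$ appearing in Eq.~(\ref{LR}) are only Hermitian (not necessarily positive or trace-one) and the singular values $c_n$ are not a priori probabilities. To turn them into valid density operators and CPTP maps one must exploit separability of $\rho_{AB}$ to rewrite the diagonal decomposition as a convex mixture of at most $L_R$ product states of genuine local density operators, and then realize $\Phi_A$ and $\Phi_B$ as measure-and-prepare channels acting on the classical flags of $\sigma_{AB}$. Once this step is in place, the observation follows immediately by combining the two one-sided implications.
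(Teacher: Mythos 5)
Your overall route is the same as the paper's: both directions reduce to the two correlation-rank bounds quoted just before the Observation---that zero-discord states and, more generally, states whose discord is locally creatable by $\Phi_A\otimes\Phi_B$ satisfy $L_R\le d_{\min}$ \cite{DVB10,BPP15}---combined with the definition of global coherence. Your ``if'' direction is exactly the paper's argument and is fine (and you are right that separability plays no role there).

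The weak point is your explicit construction for the ``only if'' direction. You reduce it to the claim that a separable state with $L_R\le d_{\min}$ can be rewritten as a convex mixture of at most $L_R$ product states of genuine local density operators, and you correctly flag this as the main obstacle, but you do not actually overcome it. The claim is both unproved and stronger than what is needed: a pair of measure-and-prepare channels $\Phi_A\otimes\Phi_B$ acting on a classical--classical seed $\sum_{ij}p_{ij}\ket{ij}\bra{ij}$ produces states of the form $\sum_{ij}p_{ij}\,\alpha_i\otimes\beta_j$ with at most $d_A$ distinct local states on $A$'s side and at most $d_B$ on $B$'s side---i.e.\ up to $d_A d_B$ product terms---and it is this ``few distinct local constituents'' form, not a length-$L_R$ separable decomposition, that characterizes locally creatable discord. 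Passing from ``separable with $L_R\le d_{\min}$'' to either form is precisely the nontrivial converse established in \cite{BPP15}; the paper does not reprove it but invokes it in one sentence (``for any bipartite coherent state with $L_R\le d_{\min}$, its bipartite coherence is present locally in a subsystem or both subsystems''). So either cite that characterization as a black box, as the paper does, or supply an actual argument for your decomposition step; as written, your sketch asserts the key step rather than proving it.
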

 The above observation follows that for any bipartite coherent state with $L_R \le  d_{\min}$, its bipartite coherence is present locally in a subsystem or both subsystems.  See Fig. \ref{Fig:Coh} for a depiction of the relations between states with no global coherence and other states that are product states, states with vanishing asymmetric correlated coherence (Eq. (8) in Ref. \cite{LZZ+24}), denoted by  $C^{\rm {asym}}_{\rm {corr}}$, states with vanishing basis-free coherence \cite{YXL+15}, denoted by $C_{\rm {BF}}$, and all bipartite states. These relations imply that global coherence is stronger than all notions of coherence of correlations \cite{BKP+16, WYY+17, GG17, LZZ+24}.

\begin{figure}[t!]
\begin{center}
\includegraphics[width=8cm]{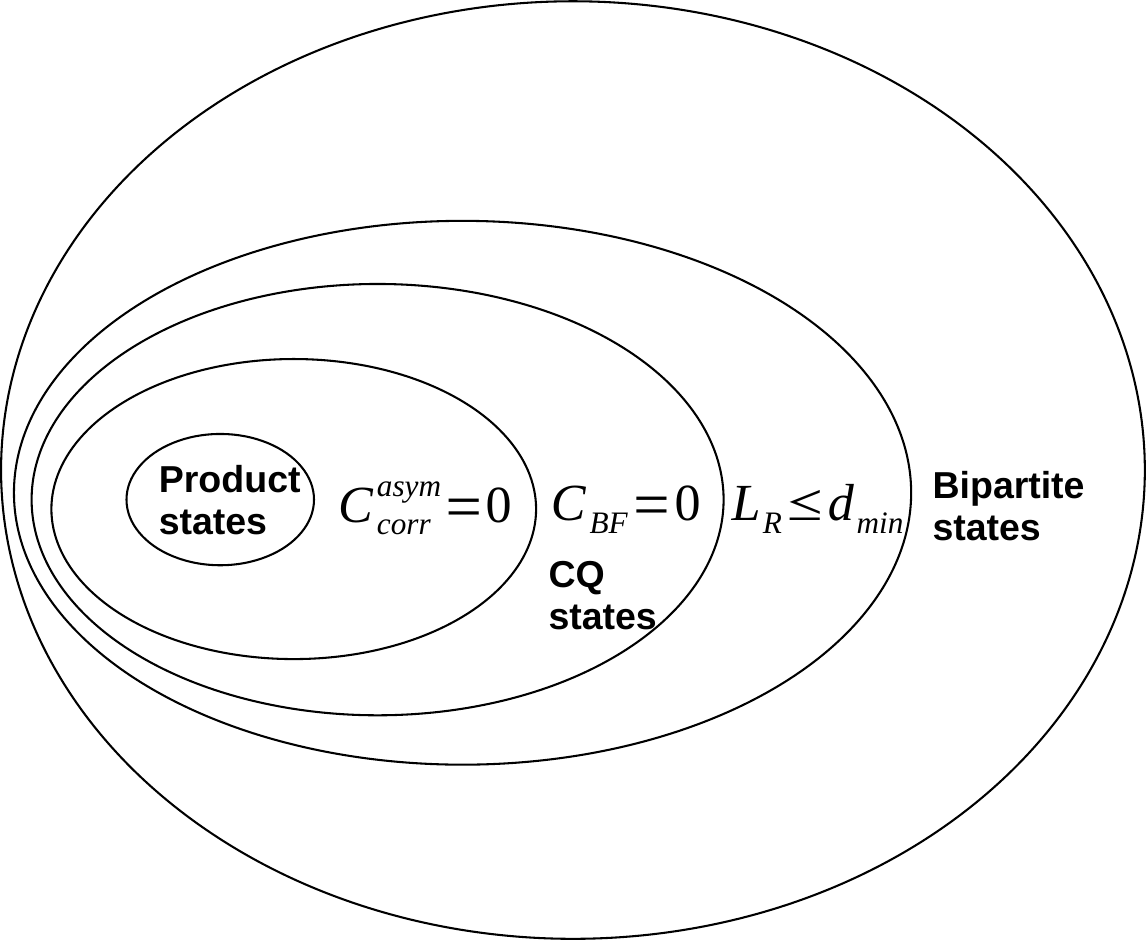}
\end{center}
\caption{Schematic illustration of the connections of states with no global coherence with other states:$ \{\rm {Product} \hspace{0.1cm} \rm{states}\}  \subset \{C^{\rm {asym}}_{\rm {corr}}=0 \} \subset \{C_{\rm {BF}}=0\} \subset \{ L_R \le d_{\min} \} \subset \{\rm {Bipartite} \hspace{0.1cm} \rm{states}\} $. Here, asymmetric correlated coherence studied in Ref. \cite{LZZ+24} and  basis-free coherence  studied in Ref. \cite{YXL+15}  are denoted by $C^{\rm {asym}}_{\rm {corr}}$ and $C_{\rm {BF}}$, respectively.  
\label{Fig:Coh}}
\end{figure}

\subsection{Information-theoretic quantification of correlations in complementary bases} 
We present the series of measures proposed in Ref. \cite{WMC+14}  that seek to capture the genuine quantum correlations in discord via SCMUB.
 For any given bipartite state, there may exist more than one basis that gives rise to  $C^{\rightarrow}(\rho_{AB})$ in Eq. (\ref{s4m}).
\begin{definition}
$C^{\rightarrow}_1$-basis is defined as the basis $\left\{ \left|\mathcal{A}_{i}^{1}\right\rangle _{A}|i=1,\cdots,d_{A}\right\} $ that optimizes $\chi(\rho_{AB},\{\Pi_{i}^{A}\})$ in 
the definition of classical correlation in Eq. (\ref{s4m}). 
\end{definition}
To quantify the simultaneous correlation in two MUBs, $Q^\rightarrow_{2}(\rho_{AB})$, a second basis $\left\{ \left|a_{j}^{2}\right\rangle _{A}|j=1,\cdots,d_{A}\right\} $,
which is mutually unbiased to the $C^{\rightarrow}_1$-basis $\left\{ \left|\mathcal{A}_{i}^{1}\right\rangle _{A}|i=1,\cdots,d_{A}\right\} $, was considered. These are MUBs in the sense that $\left|\left\langle \mathcal{A}_{i}^{1}|a_{j}^{2}\right\rangle \right|=\frac{1}{\sqrt{d_{A}}}$. 
\begin{definition}
Quantum correlation, $Q^\rightarrow_{2}(\rho_{AB})$, to capture the simultaneous correlation in two MUBs is defined as the residual correlation in the second basis mutually unbiased to the $C^{\rightarrow}_1$-basis, given by
\begin{equation}
Q^{\rightarrow}_{2}(\rho_{AB}) \equiv \max_{\{\left|\mathcal{A}_{i}^{1}\right\rangle _{A}\}} \max_{\{\left|a_{j}^{2}\right\rangle _{A}\}}\chi\left(\rho_{AB}|\left\{\left|a_{j}^{2}\right\rangle _{A} \left\langle a_{j}^{2}\right| \right\}\right) . \label{defQ2}
\end{equation}
\end{definition}
If there is only one $C^{\rightarrow}_1$-basis, the second maximization over the $C^{\rightarrow}_1$-bases $\left\{ \left|\mathcal{A}_{i}^{1}\right\rangle _{A}\right\} $ in (\ref{defQ2}) is not necessary.
If there is more than one $C^{\rightarrow}_1$-basis, and not all of them achieve the maximum in (\ref{defQ2}), then the $C^{\rightarrow}_1$-bases are redefined as those that also reach the maximum in (\ref{defQ2}).

Before we define quantum correlation, $Q^\rightarrow_{3}(\rho_{AB})$, to capture the simultaneous correlation in three MUBs, we introduce the following definition.
\begin{definition}
A basis $\{\left|a_{j}^{2}\right\rangle _{A}\}$ that achieves the maximum quantum correlation $Q^{\rightarrow}_2$ in (\ref{defQ2}) is called a $Q^{\rightarrow}_2$-basis, and is denoted as
$\left\{ \left|\mathcal{A}_{j}^{2}\right\rangle _{A}|j=1,\cdots,d_{A}\right\} $.
\end{definition}
$Q^\rightarrow_{3}(\rho_{AB})$ is defined similar to $Q^\rightarrow_{2}(\rho_{AB})$  as follows.
\begin{definition}
Quantification of the simultaneous correlation in three MUBs, $Q^\rightarrow_{3}(\rho_{AB})$, is defined as the residual correlation in a third mutually unbiased basis  as
\begin{equation}
Q^{\rightarrow}_{3}(\rho_{AB}) \equiv \max_{\{\left|\mathcal{A}_{i}^{1}\right\rangle _{A}\}} \max_{\{\left|\mathcal{A}_{j}^{2}\right\rangle _{A}\}}  \max_{\{\left|a_{k}^{3}\right\rangle _{A}\}}\chi\left(\rho_{AB}|\left\{\left|a_{k}^{3}\right\rangle _{A} \left\langle a_{k}^{3}\right| \right\}\right),
\label{defQ3}
\end{equation}
where $\left\{ \left|a_{k}^{3}\right\rangle _{A}|k=1,\cdots,d_{A}\right\} $
is any basis mutually unbiased to both $\left\{ \left|\mathcal{A}_{i}^{1}\right\rangle _{A}\right\} $
and $\left\{ \left|\mathcal{A}_{j}^{2}\right\rangle _{A}\right\} $.
\end{definition}
An optimum basis $\{\left|a_{k}^{3}\right\rangle _{A}\}$ to achieve the maximum in (\ref{defQ3}) is called a $Q^\rightarrow_3$-basis.

For a given bipartite state, the authors of Ref. \cite{WMC+14} have thus proposed the series of measures $\{C_1^{\rightarrow},Q^{\rightarrow}_2, Q^{\rightarrow}_3,\cdots, Q^{\rightarrow}_{d_A+1}\}$, here $C_1^{\rightarrow}$ denotes classical correlation in Eq. (\ref{s4m}), provided that there exists at least $(d_A+1)$ MUBs for subsystem $A$, to characterize correlations in the bipartite state. In the case that Alice's system is qubit, correlations of the state are described with the set $\{C_1^{\rightarrow},Q^{\rightarrow}_2, Q^{\rightarrow}_3\}$ as there are at most three MUBs for a qubit.
A nonzero value of   $Q^{\rightarrow}_2$ requires that the bipartite state has global coherence as we show later, in addition to a nonvanishing quantum discord $D^{\rightarrow}$. See Fig. \ref{Fig:Hierarchy} for the hierarchy of quantumness in bipartite states as  captured by discord, global coherence, and the simultaneous correlations in two and three MUBs.

\begin{figure}[t!]
\begin{center}
\includegraphics[width=8cm]{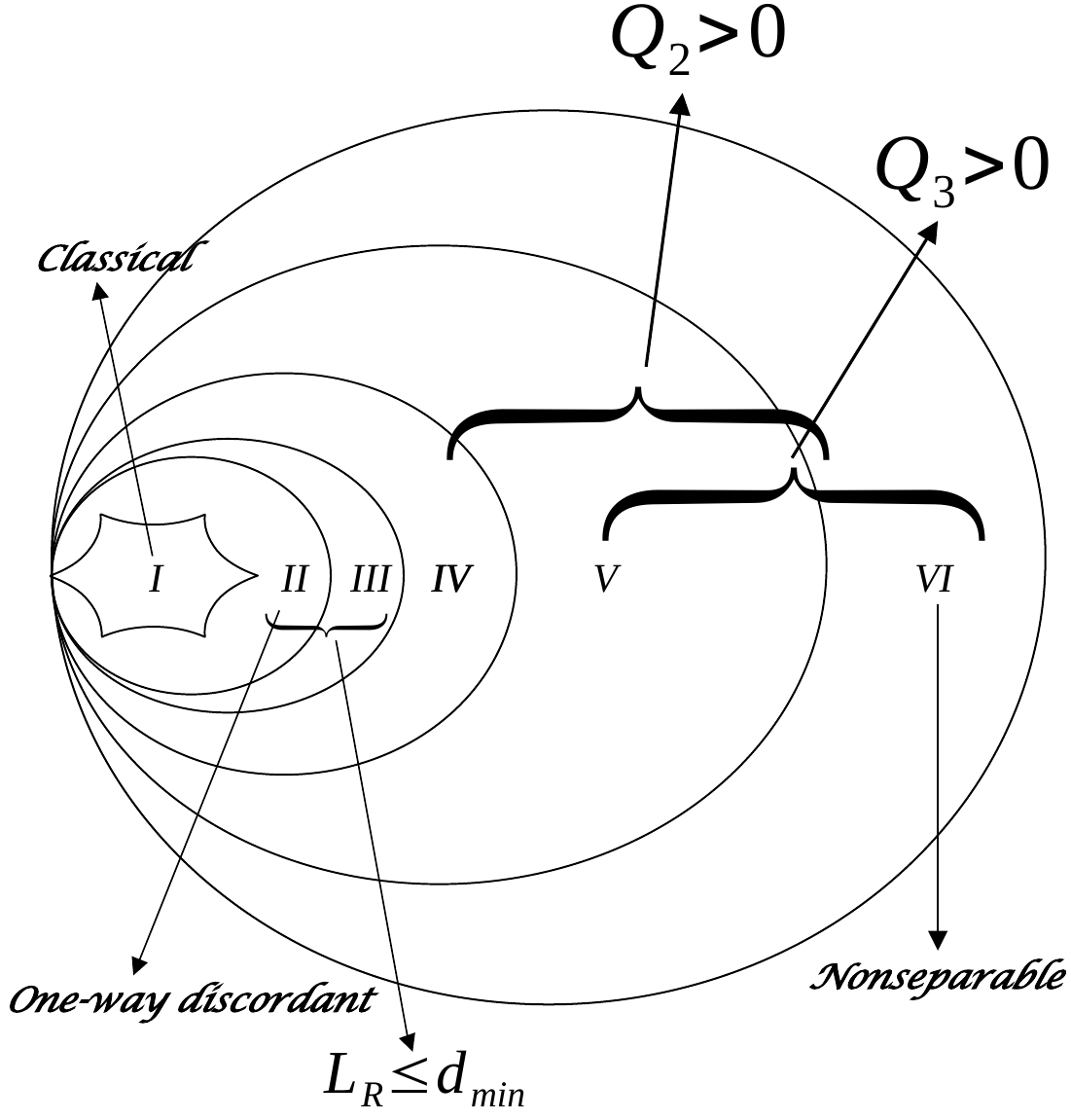}
\end{center}
\caption{Hierarchy of correlations in bipartite quantum states. The region $I$ represents correlations in classical states which have zero discord both ways. 
The regions $II$ and $III$ represent correlations in one-way and two-way discordant states, respectively, which have no global coherence (i.e., $L_R \le d_{\min}$). The regions $IV-VI$ represent simultaneous correlations in two MUBs in discord. The regions $V$ and $VI$ represent the simultaneous correlations in three MUBs in discord.  
\label{Fig:Hierarchy}}
\end{figure}

We present the values of quantifiers in $\{C_1^{\rightarrow},Q^{\rightarrow}_2, Q^{\rightarrow}_3\}$ for the Bell-diagonal states, which admit the form,
\be
\tau=\frac{1}{4}\left(\openone \otimes \openone +\sum_i c_i \sigma_i \otimes \sigma_i \right)=\sum_{ab} \lambda_{ab} \ketbra{\beta_{ab}}{\beta_{ab}}, \label{BD}
\ee
where $\sigma_i$, $i\in \{1,2,3\}$, are the Pauli matrices,  $\lambda_{ab}$ (with $a,b\in \{0,1\}$) denote the eigenvalues of the Bell-diagonal states which are given by 
$\lambda_{ab}=\frac{1}{4}\left[1+(-1)^a c_1 - (-1)^{a+b} c_2+(-1)^b c_3 \right]$
and $\ket{\beta_{ab}}=\frac{1}{\sqrt{2}}(\ket{0,b}+(-1)^a \ket{1,1 \oplus b}$
are the Bell states.  Here, we take $|c_1| \ge |c_2| \ge |c_3|$ with local unitary transformations. For this family of states, the eigenvalues satisfy the ordering $\lambda_{01} \ge  \lambda_{00} \ge \lambda_{10} \ge \lambda_{11}$. This family of states are entangled if and only if $\lambda_{01} > 1/2$. The quantum discord of this family of states $D^\rightarrow(\tau)>0$ if and only if 
$|c_2|>0$ \cite{DVB10}.

The optimal decomposition of the reduced state $\rho_B$ that achieves the  
classical correlation $C^{\rightarrow}(\tau)$ \cite{SSJ+12} is given by
\begin{align}
\rho_B&=\frac{1}{2} \rho_{B|0}+\frac{1}{2} \rho_{B|1},
\end{align}
where, 
\begin{align}
 \rho_{B|0/1}=\frac{1}{2} \left( \openone \pm c_1 \sigma_1 \right).
\end{align}
It has been checked that the ensemble corresponding to this decomposition can be achieved by Alice performing measurement in the basis of $\sigma_1$.
We also note that this is the only basis that achieves $C_1^{\rightarrow}(\tau)$ given by 
\begin{equation}\label{C1BD}
C_1^{\rightarrow}(\tau)=1-h\left(\frac{1+|c_1|}{2}\right),
\end{equation}
where $h(\cdot)$ is the binary entropic function.
From this fact, it follows that a second basis, which is mutually unbiased to the basis of $\sigma_1$, gives rise to $Q^{\rightarrow}_2(\tau)$. The second basis which is mutually unbiased to this basis is either the basis of $\sigma_2$ or
$\sigma_3$. The Holevo quantity $\chi\left(\tau\Big|\left\{\frac{1}{2}(\openone \pm \sigma_2)\right\}\right)$ pertaining to the basis of $\sigma_2$   is given by
\begin{equation}
\chi\left(\tau\Big|\left\{\frac{1}{2}(\openone \pm \sigma_2)\right\}\right)=1-h\left(\frac{1+|c_2|}{2}\right),
\end{equation}
 on the other hand, the Holevo quantity $\chi\left(\tau\Big|\left\{\frac{1}{2}(\openone \pm \sigma_3)\right\}\right)$ pertaining to the basis of $\sigma_3$   is given by
\begin{equation}
\chi\left(\tau\Big|\left\{\frac{1}{2}(\openone \pm \sigma_3)\right\}\right)=1-h\left(\frac{1+|c_3|}{2}\right).
\end{equation}
Since $|c_2| \ge |c_3|$, it follows that $Q^{\rightarrow}_2(\tau)$ is achieved by the basis of $\sigma_2$. On the other hand, $Q^{\rightarrow}_3(\tau)$ is achieved by the basis of $\sigma_3$. Therefore, $Q^{\rightarrow}_2(\tau)$ and $Q^{\rightarrow}_3(\tau)$ are given by 
\begin{equation}\label{Q2X}
Q^{\rightarrow}_2(\tau)=1-h\left(\frac{1+|c_2|}{2}\right)
\end{equation}
and
\begin{equation}\label{Q3X}
Q^{\rightarrow}_3(\tau)=1-h\left(\frac{1+|c_3|}{2}\right),
\end{equation}
respectively. $Q^{\rightarrow}_2(\tau)>0$ for any Bell-diagonal state having a nonzero discord, whereas
$Q^{\rightarrow}_3(\tau)>0$ if and only if the given Bell-diagonal state has a nonzero discord, with  $|c_3|>0$.

\subsection{Quantum steering} 
Let us consider a steering scenario where two spatially separated parties, Alice and Bob, share an unknown quantum system $\rho_{AB}$.  Alice 
performs a set of black-box measurements, and the Hilbert-space dimension of Bob's 
subsystem is known. Such a scenario is called one-sided device-independent since
Alice's measurement operators $\{M_{a|x}\}_{a,x}$, which are positive operator-valued measures (POVM), are unknown. The steering scenario 
is completely characterized by the set of unnormalized conditional states on Bob's
side $\{\sigma_{a|x}\}_{a,x}$, which is called an unnormalized assemblage. Each element
in the unnormalized assemblage is given by $\sigma_{a|x}=p(a|x)\rho_{a|x}$,  where $p(a|x)$ is the conditional probability of getting the outcome $a$ when Alice performs the measurement $x$;
$\rho_{a|x}$ is the normalized conditional state on Bob's side.
Quantum theory predicts that all valid assemblages should satisfy the following criteria:
\begin{equation}
\sigma_{a|x}={\rm Tr}_A{[ (M_{a|x} \otimes \openone) \rho_{AB}]} \hspace{0.5cm} \forall \sigma_{a|x} \in \{\sigma_{a|x}\}_{a,x}.
\end{equation}

\begin{definition}
In the above scenario,  Alice demonstrates  steerability to Bob in a 1SDI way \cite{WJD07}  if the assemblage does not have a local hidden state (LHS) model, i.e., if, for all $a$, $x$, there is no decomposition of $\sigma_{a|x}$ in the form,
\begin{equation} \label{WJDLHS}
\sigma_{a|x}=\sum_\lambda p(\lambda) p(a|x,\lambda) \rho_\lambda,
\end{equation}
where $\lambda$ denotes the classical random variable (also called local hidden variable (LHV)), which occurs with probability 
$p(\lambda)$; $\rho_{\lambda}$ are called local hidden states which satisfy $\rho_\lambda\ge0$ and
$\tr{\rho_\lambda}=1$.
\end{definition}

Next, we are going to present the formal definition of a different form of steering, called 1SSDI steering,
as introduced in Ref. \cite{JDS_PRA23}. In 1SSDI steering, the dimension of the resource on the untrusted side is constrained. But no other constraint is imposed on the untrusted side. In other words, the untrusted side is no longer fully device-independent here; it is semi device-independent.
\begin{definition}
In the steering scenario as described before, Alice demonstrates    steerability to Bob in a 1SSDI way if the assemblage does not have an LHS model with a restricted hidden variable dimension, i.e. if for all $a$, $x$, there is no decomposition of $\sigma_{a|x}$ in the form,
\begin{equation}\label{LHSdl}
\sigma_{a|x}=\sum^{d_\lambda-1}_{\lambda=0} p(\lambda) p(a|x,\lambda) \rho_\lambda,
\end{equation}
with $d_\lambda \le d_A$.
\end{definition}

The detection of 1SSDI steering from a nonsignaling (NS) box can be introduced following Ref. \cite{JDS_PRA23}. Suppose Bob performs a set of projective measurements $\{\Pi_{b|y}\}_{b,y}$ on $\{\sigma_{a|x}\}_{a,x}$. Then, the scenario is characterized by the set of measurement correlations or an NS box shared
between Alice and Bob $P(ab|xy)$:=$\{p(ab|xy)\}_{a,x,b,y}$, where $p(ab|xy)$ = ${\rm Tr}  ( \Pi_{b|y} \sigma_{a|x} )$.    
\begin{definition}
A bipartite box $P(ab|xy)$:=$\{p(ab|xy)\}_{a,x,b,y}$ arising in a steering scenario detects  steerability in a 1SSDI way if and only if there is no decomposition of the box in the form given by \begin{equation} \label{LHV-LHS}
p(ab|xy)=\sum^{d_\lambda-1}_{\lambda=0} p(\lambda) p(a|x, \lambda) p(b|y, \rho_{\lambda}) \hspace{0.3cm} \forall a,x,b,y,
\end{equation}
with $d_\lambda\le d_A$.
\end{definition}
Due to the additional constraint on the dimension of the resource at the untrusted side, that is, $d_\lambda\leq d_A$, there exist unsteerable assemblages satisfying Eq. (\ref{WJDLHS}) in the 1SDI context while violating Eq. (\ref{LHSdl}) in the 1SSDI context.
In fact, such assemblages have superunsteerability as defined in \cite{DBD+18}.
 \begin{definition}
Superunsteerability of a given unsteerable box holds if and only if there is no LHV-LHS model of the box given by
\begin{equation} 
p(ab|xy)=\sum^{d_\lambda-1}_{\lambda=0} p(\lambda) p(a|x, \lambda) p(b|y, \rho_{\lambda}) \hspace{0.3cm} \forall a,x,b,y,  \nonumber
\end{equation}
with $d_\lambda\le d_A$. 
\end{definition}
In the context of the resource theory for 1SDI steering \cite{GA15,DDJ+18}, all unsteerable assemblages and boxes produced from them, thus including superunsteerability, are free. However, in the context of 1SSDI scenarios, superunsteerability is not free as it provides the detection of the 1SSDI steerable resource.  The key difference in witnessing 1SSDI steering is that the classical randomness $\lambda$ is bounded by $d_\lambda\le d_A$, whereas the classical randomness $\lambda$ is freely available to witness the 1SDI steerability. See Ref. \cite{JKC+24} for a discussion of the resource theory of 1SSDI steering.

Let us now present an example of superunsteerability \cite{DBD+18}  where Alice performs two dichotomic black-box measurements, and Bob performs two dichotomic trusted qubit measurements in MUBs.  
Consider the unsteerable white noise-BB84 family given by  
\begin{equation}
\label{BB84}
P(ab|xy) = \frac{1 + (-1)^{a \oplus b \oplus x.y} \delta_{x,y} V }{4},
\end{equation}
with $0 < V \leq 1/\sqrt{2}$. The above box can be produced by a two-qubit Werner state with an entanglement or a nonzero quantum discord for appropriate local non-commuting measurements. On the other hand, it cannot be simulated by an LHV-LHS model with $d_\lambda=2$ as shown in Ref. \cite{DBD+18}.

In Ref. \cite{JKC+24}, the relationship between bipartite coherence, quantum discord, and superunsteerability has been explored.
  In this relationship,
 it has been demonstrated that the global coherence, i.e., discord with $L_R > d_{min}$, is necessary to demonstrate superunsteerability.

In Ref. \cite{JDK+19}, a quantity called Schr\"{o}dinger strength (SS) was defined to quantify the nonclassicality of unsteerable boxes which demonstrate superunsteerability.
Note that a box $P(ab|xy)$ in the $n$-setting steering scenario is the set of joint probabilities $p(ab|xy)$ for all possible $a$, $b$ and for all $x$ $\in$ $\{0, 1, 2, ..., n-1\}$ and $y$ $\in$ $\{0, 1, 2, ..., n-1\}$. In this scenario, any box $P(ab|xy)$ can be decomposed into a convex mixture of a steerable part in 1SDI scenario and an unsteerable part in 1SDI scenario,
\be \label{steerstrength}
 P(ab|xy)=p P_{S}(ab|xy)+(1-p) P_{US}(ab|xy),    
\ee
where $P_{S}(ab|xy)$ is a steerable box in 1SDI scanrio  and $P_{US}(ab|xy)$  is an unsteerable box in 1SDI scenario, which may be superunsteerable; $0 \leq p \leq 1$. 
The weight of the box $P_{S}(ab|xy)$ maximized over all possible decompositions of 
the  form (\ref{steerstrength}) is called the Schr\"{o}dinger strength (SS) of the box $P(ab|xy)$, given by 
\be \label{SSdefi}
{\rm SS}_{n} \Big( P(ab|xy) \Big):=\max_{\text{decompositions}}p.
\ee
Here, $0 \leq {\rm SS}_{n} \leq 1$ (since, $0 \leq p \leq 1$). 
The optimal decomposition that gives the SS of the box is called the canonical decomposition in which the steerable part is an extremal steerable box in 1SDI scenario. An extremal steerable box in 1SDI scenario cannot be written as a convex mixture of boxes in the given scenario.


For $ 0 < V \le 1/\sqrt{2}$, the box (\ref{BB84}) can be decomposed as a convex mixture 
of the extremal (Ext) steerable box  and an unsteerable box in 1SDI scenario as follows \cite{JDK+19}:
\be \label{BB84dec1}
P(ab|xy)=V P_{\rm Ext}(ab|xy)+ (1-V) P_N(ab|xy),
\ee
where $P_N(ab|xy)$ is the white noise for which $P_N(ab|xy)=1/4$ for all $a,b,x,y$. In fact, $P_N(ab|xy)$ is an unsteerable box not only in 1SDI scenario, but also in 1SSDI scenario. On the other hand, 
\be
P_{\rm Ext}(ab|xy)=\frac{1+(-1)^{a+b+xy}\delta_{x,y}}{4}, \quad a,b,x,y \in \{0,1\}.
\ee
The above decomposition is optimal  implying
the SS as $V$ \cite{JDK+19}.

In a $n$-setting steering scenario where Alice performs $n$ black-box dichotomic measurements, and Bob performs $n$ characterized dichotomic measurements, let Alice's measurement settings and Bob's measurement settings are denoted by $A_x$ and $B_y$ respectively, where $A_x$ $\in$ $\tilde{A}$ = $\{A_0, A_1, A_2, ..., A_{n-1} \}$ and $B_y$ $\in$ $\tilde{B}$ = $\{ B_0, B_1, B_2, ..., B_{n-1} \}$. The sets of Alice's and Bob's measurement settings are denoted by $\tilde{A}$ and $\tilde{B}$, respectively. 
 We will now present the definition of SS of a bipartite state in an $n$-setting scenario.
\begin{definition}
 The SS of a bipartite state $\rho$ in $n$-setting scenario is defined as:
 \be
 {\rm SS}_{n}(\rho)=\max_{\tilde{A}, \tilde{B}} {\rm SS}_{n}\Big(P(ab|xy;\rho) \Big).
 \ee
Here ${\rm SS}_{n}\Big(P(ab|xy;\rho) \Big)$ is the SS of the box $P(ab|xy;\rho)$ = $\{ p(ab|xy;\rho) \}_{a,x,b,y}$ as defined in Eq. (\ref{SSdefi}), where $p(ab|xy;\rho)$ = Tr$[ (M_{a|x} \otimes M_{b|y}) \rho ]$ is the joint probability distribution of getting the outcomes $a$ and $b$ when measurements $A_x$ and $B_y$ are performed locally by Alice and Bob, respectively, on the shared bipartite state $\rho$. $M_{a|x}$ is the measurement operator corresponding to the measurement settings $A_x$ and outcome $a$. $M_{b|y}$ is defined similarly. Here the maximization is taken over all possible sets $\tilde{A}$ = $\{A_0, A_1, A_2, ..., A_{n-1} \}$ and $\tilde{B}$ = $\{ B_0, B_1, B_2, ..., B_{n-1} \}$.
\end{definition}

In Ref. \cite{JDK+19}, the SS of two-qubit Bell-diagonal states (\ref{BD}) have been evaluated in the following two-setting and three-setting scenarios. 
(i) Alice performs two black-box dichotomic measurements, and Bob performs 
projective qubit measurements corresponding to two  MUBs
and 
(ii) Alice performs three black-box dichotomic measurements, and Bob performs 
projective qubit measurements corresponding to three MUBs.

The SS of two-qubit Bell-diagonal states in the two-setting scenario has been evaluated as follows \cite{JDK+19}:
\begin{prop}
The SS of the Bell-diagonal states defined in Eq.(\ref{BD}) in the two-setting steering scenario is given by
\ba
{\rm SS}_{2}(\tau)=|c_2|. \label{SSBD2}
\ea 
\end{prop}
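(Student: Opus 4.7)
The plan is to prove $\mathrm{SS}_2(\tau)\ge|c_2|$ and $\mathrm{SS}_2(\tau)\le|c_2|$ separately, the lower bound by an explicit construction reducing the induced box to a BB84-like family whose Schr\"odinger strength has already been evaluated, and the upper bound by an optimisation over measurement choices.

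For the lower bound, I would use the ordering $|c_1|\ge|c_2|\ge|c_3|$ and, after absorbing signs into local unitaries and outcome relabellings, assume $c_1,c_2\ge 0$. Let Bob measure the two Pauli directions $\sigma_1$ and $\sigma_2$, which form a pair of MUBs, and let Alice perform the aligned Pauli measurements with an outcome swap on her second setting. Because $\tau$ has maximally mixed marginals, the induced box $P(ab|xy;\tau)$ has vanishing marginals, vanishing off-diagonal correlators and diagonal correlators $(c_1,-c_2)$. I would then write the explicit convex decomposition
\begin{equation*}
P(ab|xy;\tau)=|c_2|\,P_{\mathrm{Ext}}(ab|xy)+(1-|c_2|)\,P_{US}(ab|xy),
\end{equation*}
with $P_{\mathrm{Ext}}$ the extremal steerable box of Eq.~(\ref{BB84dec1}) and $P_{US}$ a residual box with a single nonzero correlator $(|c_1|-|c_2|)/(1-|c_2|)\in[0,1]$ along the first MUB direction and vanishing correlator along the second. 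Such a residual is manifestly LHS-decomposable by a hidden variable encoding the sign of Bob's $\sigma_1$ component, yielding $\mathrm{SS}_2(\tau)\ge|c_2|$.

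For the upper bound, I would parametrise an arbitrary allowed scenario by unit Bloch vectors $\vec a_0,\vec a_1$ for Alice and an orthonormal pair $\vec b_0\perp\vec b_1$ for Bob. Since $\tau$ has vanishing local Bloch vectors, the induced box has zero marginals and correlator matrix $E_{xy}=\vec a_x^{\top}T\vec b_y$ with $T=\mathrm{diag}(c_1,c_2,c_3)$. I would then argue, by analysing how an extremal steerable component can be extracted from such a correlator matrix in the two-setting MUB scenario, that the canonical steerable weight cannot exceed a suitable functional of $E$ that is itself bounded by the second-largest coefficient $|c_2|$ of $T$. A convexity argument then reduces general dichotomic qubit POVMs to projective ones, completing $\mathrm{SS}_2(\tau)\le|c_2|$.

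The main obstacle is the upper bound. A naive entrywise inequality such as $\min(|E_{00}|,|E_{11}|)\le|c_2|$ fails: a $45^{\circ}$ rotation inside the plane of the two dominant Pauli axes equalises $|E_{00}|=|E_{11}|=\sqrt{(c_1^2+c_2^2)/2}$, which strictly exceeds $|c_2|$. One must instead control the full $2\times 2$ correlator matrix globally, for instance through its second singular value, combined with a singular-value interlacing inequality of Horn--Weyl type that exploits the orthonormality of Bob's MUB directions together with the column-normalisation of Alice's Bloch vectors. Working out this global bound and showing that the Pauli-MUB choice aligned with $|c_1|\ge|c_2|$ saturates it constitutes the technical heart of the argument.
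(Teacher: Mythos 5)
The paper does not actually prove this proposition: it is quoted verbatim from Ref.~\cite{JDK+19}, so there is no in-paper derivation to compare your argument against. Judged on its own merits, your lower bound is essentially the same construction the paper illustrates for the white-noise BB84 box in \erf{BB84dec1}: with Bob measuring $\sigma_1,\sigma_2$ and Alice aligned (with an outcome flip on the second setting), the box has zero marginals and diagonal correlators $(|c_1|,-|c_2|)$, and peeling off $|c_2|\,P_{\rm Ext}$ leaves a residual with a single correlator $(|c_1|-|c_2|)/(1-|c_2|)$ that admits an LHS model with $d_\lambda=2$. That half is fine.

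The upper bound, however, is a genuine gap, and you acknowledge as much: the ``technical heart'' you defer is precisely the content of the proposition. Two distinct difficulties are left unresolved. First, the quantity ${\rm SS}_2$ in \erf{SSdefi} is a weight of an extremal 1SDI-steerable component of the \emph{box}, and you have not supplied any lemma relating that weight to a spectral functional of the $2\times2$ correlator matrix $E$; for boxes with nonzero off-diagonal correlators (which your own $45^\circ$ example produces) no such relation is available off the shelf. Second, even granting that $\sigma_2(E)$ were the right functional, the interlacing argument does not close: writing $E=A^{\top}TB$ with $B$ having orthonormal columns gives $\sigma_2(TB)\le|c_2|$ by Cauchy interlacing, but Alice's Bloch vectors $\vec a_0,\vec a_1$ are only unit vectors, not orthonormal, so $\sigma_1(A)$ can be as large as $\sqrt{2}$ and the naive chain only yields $\sigma_2(E)\le\sqrt{2}\,|c_2|$. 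One would need to exploit the trade-off between the non-orthogonality of Alice's directions and the rank of $E$, and then still connect the result back to the decomposition-based definition of ${\rm SS}_2$ (including the reduction from POVMs to projective measurements, which you assert but do not argue). As it stands, only ${\rm SS}_2(\tau)\ge|c_2|$ has been established.
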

On the other hand, in the three-setting scenario,  the Schr\"{o}dinger strength of two-qubit Bell-diagonal states  has been evaluated as follows \cite{JDK+19}:  
\begin{prop}
The SS of the Bell-diagonal states defined in Eq.(\ref{BD}) in the three-setting steering scenario  is given by
\ba
{\rm SS}_{3}(\tau)=|c_3|. \label{SSBD3}
\ea 
\end{prop}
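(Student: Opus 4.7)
The plan is to adapt the proof of Proposition~1 (two Pauli settings) to three Pauli settings, through three stages: reduction of the supremum to a canonical measurement choice, construction of a decomposition achieving weight $|c_3|$, and a matching upper bound.

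First I would argue that the supremum in $\mathrm{SS}_3(\tau)$ is attained when Bob's three MUBs are the eigenbases of $\sigma_1,\sigma_2,\sigma_3$ and Alice's three dichotomic black-box measurements match. Since the correlation tensor of $\tau$ is already diagonal in the Pauli frame with singular values $|c_1|\ge|c_2|\ge|c_3|$, any other MUB triple rotates the correlation tensor; a singular-value argument then shows that the smallest diagonal coefficient in the new frame can only decrease, thereby shrinking the achievable steerable weight. After relabelling Alice's outcomes to absorb the signs of $c_i$ (which does not change $\mathrm{SS}_3$), we may assume all $c_i>0$, and the resulting correlation box reads
\begin{equation*}
P(ab|xy;\tau)=\frac{1+(-1)^{a+b}\delta_{xy}\,c_x}{4},\quad x,y\in\{1,2,3\}.
\end{equation*}

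Second, I propose the decomposition
\begin{equation*}
P(ab|xy;\tau)=c_3\,P_{\mathrm{Ext}}(ab|xy)+(1-c_3)\,P_{US}(ab|xy),
\end{equation*}
with $P_{\mathrm{Ext}}(ab|xy)=\bigl[1+(-1)^{a+b}\delta_{xy}\bigr]/4$ displaying perfect correlations along all three MUB diagonals. This $P_{\mathrm{Ext}}$ is 1SDI-steerable and extremal in the three-setting MUB scenario (any LHS model would require a single qubit state to be a simultaneous eigenstate of all three Pauli operators), while $P_{US}$ is the residual box whose Pauli-direction correlations are $V_i=(c_i-c_3)/(1-c_3)$ for $i=1,2$ and $V_3=0$.

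The main obstacle is to certify that $P_{US}$ is 1SDI-unsteerable. I would construct an explicit LHV--LHS model with hidden qubit states $\rho_\lambda=\tfrac{1}{2}(\openone+\vec n_\lambda\cdot\vec\sigma)$ whose Bloch vectors lie in the $\sigma_1$-$\sigma_2$ plane (forcing $V_3^{US}=0$ automatically), paired with response functions $p(a|x,\lambda)$ tuned so that the two remaining correlations come out to $V_1$ and $V_2$. Such a model exists whenever $V_1^{2}+V_2^{2}\le 1$, and this inequality follows from the Bell-diagonal positivity $\lambda_{11}\ge 0$, namely $c_1+c_2\le 1+c_3$, which yields $(c_1-c_3)^2+(c_2-c_3)^2\le(1-c_3)^2$. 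This establishes $\mathrm{SS}_3(\tau)\ge c_3$. For the matching upper bound, in any alternative decomposition $P=pP_S+(1-p)P_{US}'$ with $P_S$ extremal steerable in the three-setting MUB scenario, Alice's marginals together with the perfect-correlation structure of extremal steerable boxes force the diagonal correlations of $P_{US}'$ to equal $(c_x-p)/(1-p)$ along each Pauli axis; the 1SDI-unsteerability condition applied to $P_{US}'$, combined with Bell-diagonal positivity of $\tau$, is saturated precisely at $p=c_3$, yielding $\mathrm{SS}_3(\tau)\le c_3=|c_3|$. Combining the two bounds gives the claim.
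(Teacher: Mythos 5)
The paper does not actually prove this proposition: Propositions~1 and~2 are imported verbatim from Ref.~\cite{JDK+19}, and the only in-paper guidance is the definition of ${\rm SS}_n$ and the two-setting illustration around Eq.~(\ref{BB84dec1}). Judged on its own, the lower-bound half of your argument is sound and is the natural three-setting analogue of that illustration: the decomposition $P=c_3P_{\rm Ext}+(1-c_3)P_{US}$ is valid, the residual diagonal correlations are $V_i=(c_i-c_3)/(1-c_3)$ with $V_3=0$, the inequality $(c_1-c_3)^2+(c_2-c_3)^2\le(1-c_3)^2$ does follow from positivity of $\tau$ (which gives $|c_1|+|c_2|\le 1+|c_3|$) together with the ordering $|c_1|\ge|c_2|\ge|c_3|$, and an explicit eight-term LHV--LHS model with Bloch vectors $(\pm V_1,\pm V_2,0)$ and matched deterministic responses reproduces $P_{US}$ exactly. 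So a decomposition of steerable weight $|c_3|$ exists.

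The genuine gap is the upper bound. Your argument rests on the premise that every extremal $1$SDI-steerable box in the three-MUB scenario has perfect correlations along all three diagonals; that is false, and without it the bound collapses. Concretely, the box generated by the assemblage $\{\tfrac{1}{2}\Pi^{(1)}_a,\tfrac{1}{2}\Pi^{(2)}_a,\tfrac{1}{4}\openone\}$ (perfect correlations along the $\sigma_1$ and $\sigma_2$ diagonals, white noise on the third) is $1$SDI-steerable, since any LHS model would require hidden qubit states that are simultaneously eigenstates of $\sigma_1$ and $\sigma_2$. Take $c_1=c_2=c_3=1/3$ and mix this box with weight $p=1/2$: the residual is a valid box with uniform marginals, zero off-diagonal correlations, and diagonal correlations $(-1/3,-1/3,2/3)$, which satisfies your own unsteerability criterion $\sum_x V_x^2=2/3\le 1$ and hence admits an LHV--LHS model. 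Under the definition of ${\rm SS}_n$ as literally stated in this paper (the maximal weight of a $1$SDI-steerable part over all decompositions), this yields a steerable weight of $1/2>|c_3|=1/3$. So either the optimization must be restricted by the additional structure Ref.~\cite{JDK+19} imposes on the canonical decomposition (e.g., extremality of the steerable part in the quantum set of the steering scenario, or a dimension-bounded LHS model for the unsteerable part), in which case you must state and use that restriction explicitly, or the matching upper bound simply does not hold as argued. Relatedly, your stage-one reduction to the Pauli frame is only asserted; by Schur--Horn the smallest diagonal entry of a rotated correlation matrix can only \emph{increase}, so the direction of your singular-value claim needs justification as well.
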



\subsection{Quantum steering ellipsoids for  two-qubit states}\label{II}
	Consider a two-qubit state
	\begin{align}
		\rho_{AB} =  \frac{1}{4} \bigg( & \mathbbm{1}_A \otimes \mathbbm{1}_B+{\mf{a}} \cdot \boldsymbol{\sigma} \otimes \mathbbm{1}_B+\mathbbm{1}_A \otimes {\mf{b}} \cdot \boldsymbol{\sigma}     \nn \\  
						        &  +\sum^3_{j,k=1}T_{jk}\sigma_j\otimes\sigma_k \bigg), 
		\label{state}
	\end{align}
where $\boldsymbol{\sigma}\equiv (\sigma_1, \sigma_2, \sigma_3)$ denotes the vector of Pauli operators and $\mathbbm{1}_A, \mathbbm{1}_B$ are identity operators. Here ${\mf{a}} = (a_1, a_2, a_3)$ and ${\mf{b}} = (b_1, b_2, b_3)$ are the Bloch vectors of Alice's and Bob's reduced states and $T_{jk}$ are elements of the spin correlation matrix,  i.e.,
	\begin{align}
	 a_j&:=\tr{\rho_{AB}(\sigma_j\otimes \mathbbm{1}_B)}, ~~~ b_k:=\tr{\rho_{AB}(\mathbbm{1}_A\otimes \sigma_k)}, \nn \\
	 T_{jk}&:=\tr{\rho_{AB}(\sigma_j\otimes \sigma_k)}, \qquad (j, k=1, 2, 3). \label{notation}
	\end{align}

	Let Alice performs a general POVM which has an element $M_a=a_0\left(\mathbbm{1}_A+\mf{m}_a\cdot\boldsymbol{\sigma} \right)$, with $a_0\geq 0$ and $|\mf{m}_a|\leq 1$ corresponding to an outcome $a$. This outcome is obtained with 
probability  
	\beq
	p_{M_a}=\tr{\rho_{AB}\, (M_a\otimes \mathbbm{1}_B)}=a_0\left(1+{\mf{a}}\cdot \mf{m}_a\right), \nn
	\eeq
	leading to the steered state 
	\begin{align}\label{reducedrho}
	\rho^{M_a}_B&=\frac{{\rm Tr}_A[\rho_{AB}\,(M_a\otimes \mathbbm{1}_B)]}{p_{M_a}} \nn \\
    &=\frac{1}{2}\left(\mathbbm{1}_B+\frac{({\mf{b}}+T^\top\mf{m}_a)\cdot \boldsymbol{\sigma}}{1+{\mf{a}}\cdot\mf{m}_a}\right) 
	\end{align}
	for Bob's qubit. Here, $T^\top$ is the transpose of the matrix $T=(T_{jk})$.

    \begin{definition}
        Considering all possible local measurements by Alice, it follows that the corresponding set of Bob's steered states is represented by the set of Bloch vectors
	\beq \label{reduced}
	 \mathcal{E}_{B|A}:=\left\{ \frac{{\mf{b}}+T^\top\mf{m}_a}{1+{\mf{a}}\cdot\mf{m}_a}:|\mf{m}_a|\leq 1 \right\}.
	 \eeq
	Quantum steering ellipsoid (QSE) is defined as the above set which  forms a (possibly degenerate) ellipsoid ~\cite{JPJR14}. The subscript $B|A$ indicates the steering ellipsoid for Bob that is generated by Alice's local measurements.
    \end{definition}
	  Similarly, a QSE for Alice is generated by Bob's local measurements, denoted by $\mathcal{E}_{A|B}$, can be defined.
	The  QSE $\mathcal{E}_{B|A}$ is fully determined by its centre,
	\beq
	{\mf{c}}_{B|A}=\frac{{\mf{b}}-T^\top{\mf{a}}}{1-a^2}, \nn
	\eeq
	and its orientation matrix,
	\beq
	Q_{B|A}=\frac{1}{1-a^2}\left(T-{\mf{a}}{\mf{b}}^\top\right)^\top\left(I+\frac{{\mf{a}}{\mf{a}}^\top}{1-a^2}\right)\left(T-{\mf{a}}{\mf{b}}^\top\right), \nn
	\eeq
	where the eigenvalues and corresponding eigenvectors of $Q_{B|A}$ describe the squared lengths of the ellipsoid's semiaxes and their orientations~\cite{JPJR14}. 
	
	The QSE $\mathcal{E}_{B|A}$, together with the reduced Bloch vectors ${\mf{a}}$ and ${\mf{b}}$,  provides a faithful representation of any two-qubit state up to local unitary operations on Alice's qubit ~\cite{JPJR14}. Additionally, its various geometric properties encode useful information about the state.
  	 For example, the state is separable if and only if its QSE can be nested in a tetrahedron that is, in turn, nested in the Bloch sphere~\cite{JPJR14}. 
    
    Furthermore, two-qubit separable states can be classified into states with a nonzero volume for their QSEs and states with degenerate QSEs. A nonzero volume for QSEs of the two-qubit states is stronger than a nonvanishing discord. A degenerate QSE can be $2$-dimensional (a steering pancake) or $1$-dimensional (a steering needle) or trivially $0$-dimensional. Within QSE, a two-qubit state has zero discord from Alice to Bob if and only if her QSE degenerate to a segment of a diameter.
    
   In cases of steering pancakes and steering needles, steering can be either complete or incomplete. 
   \begin{definition}
         Complete steering of Bob occurs if and only if for
    any convex decomposition of the reduced state of Bob in terms of the states in $\mathcal{E}_{B|A}$, there exists a POVM on Alice's side that steers Bob to this ensemble. 
   \end{definition}
   For any separable state whose QSE is either a pancake or needle, the complete steering of Bob by Alice occurs if the reduced state of Alice is maximally mixed \cite{JPJR14}.

\section{Results}

\subsection{Operational SCMUB via $1$SSDI steering}
Here,  we identify which discordant states have the simultaneous correlations in two MUBs, i.e., $Q^{\rightarrow}_2>0$. We then show that any $Q^{\rightarrow}_2>0$ of the given bipartite state can be operationally identified as exhibiting $1$SSDI steering in a two-setting scenario.

\begin{lem}
    A bipartite state in $\mathbb{C}^{d_A}\otimes\mathbb{C}^{d_B}$ has $Q^{\rightarrow}_2=0$ if and only if $L_R$ in Eq. (\ref{LR}) satisfies $L_R \le d_{\min}$.
\end{lem}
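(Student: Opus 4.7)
The strategy is to convert the condition $Q^{\rightarrow}_{2}=0$ into a rank condition on the operator-Schmidt correlation matrix, using the generalized Bloch representation of $\rho_{AB}$. Write $\rho_{AB}$ in Bloch form with reduced Bloch vectors $\vec a,\vec b$ and spin correlation matrix $T$, and set $M:=T-\vec a\vec b^{\top}$. By block row/column reducing the full correlation matrix $R$ in an orthonormal Hermitian basis that isolates the identity directions on each side, one obtains the identity $L_{R}=1+\mathrm{rank}(M)$, so the target statement becomes $\mathrm{rank}(M)\le d_{\min}-1$.

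The proof rests on two Bloch-space identities. (i) For a rank-one projector on $A$ with Bloch vector $\vec m$, a direct partial-trace computation gives $\rho_{B|m}=\rho_B$ iff $\vec m^{\top}M=\vec 0$; hence $\chi=0$ for a projective basis measurement iff each of its Bloch vectors lies in the left null space $N_{L}(M):=\{\vec v:\vec v^{\top}M=\vec 0\}$. (ii) $B_{2}$ is mutually unbiased to $B_{1}$ iff the Bloch vectors of the two bases are mutually orthogonal in $\mathbb R^{d_{A}^{2}-1}$, and since the Bloch vectors of any orthonormal basis on $\mathbb C^{d_{A}}$ span a $(d_{A}-1)$-dimensional subspace, varying $B_{2}$ over all MUBs to a fixed $B_{1}$ sweeps their Bloch vectors through the full orthogonal complement $V_{1}^{\perp}$, where $V_{1}=\mathrm{span}\{\vec\mu_{i}\}$.

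For the forward direction, fix any $C_{1}$-basis $B_{1}$; the hypothesis $Q^{\rightarrow}_{2}=0$ forces every MUB to $B_{1}$ to give $\chi=0$, so combining (i) and (ii) yields $V_{1}^{\perp}\subseteq N_{L}(M)$, hence $\mathrm{rank}(M)\le d_{A}-1$ and $L_{R}\le d_{A}$. For the reverse direction I would invoke the \cite{BPP15} characterization already cited in the paper: states with $L_{R}\le d_{\min}$ are exactly those of the form $\rho=(\Phi_{A}\otimes\Phi_{B})(\rho_{\mathrm{CQ}})$ for a classical--quantum seed $\rho_{\mathrm{CQ}}$ and local CPTP maps. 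For a CQ seed the MUB overlap $|\langle i|a_{j}\rangle|^{2}=1/d_{A}$ immediately gives $\rho_{B|j}=\rho_B$ for every MUB to the classical basis, so $Q^{\rightarrow}_{2}(\rho_{\mathrm{CQ}})=0$; one then verifies that this property is preserved by local CPTP evolution, or equivalently that the rank condition $\mathrm{rank}(M)\le d_{\min}-1$ can be realized as a null-space inclusion for some admissible $C_{1}$-basis.

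The main obstacle is tightening the forward bound $L_{R}\le d_{A}$ to the required $L_{R}\le d_{\min}$. When $d_{A}>d_{B}$ the $A$-side Bloch argument alone is insufficient, and one must exploit a complementary constraint coming from the structure of $\rho_B$---for example, bounding the right null space of $M$ symmetrically---or appeal directly to the \cite{BPP15} classification of states with locally creatable discord. A secondary technical point is the MUB-sweep step in (ii): it is immediate for $d_{A}=2$, but for higher dimensions it relies on the existence of a sufficiently rich family of MUBs, typically requiring $d_{A}$ to be a prime power.
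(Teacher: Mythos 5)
Your Bloch-space reduction is a serious attempt to make this lemma precise, and it goes well beyond the paper's own proof, which consists of the bare assertion that $L_R\le d_{\min}$ is the same thing as ``having correlation in only a single basis'' and that this in turn is the same as $Q^{\rightarrow}_2=0$. The ingredients you set up are sound: $L_R=1+\mathrm{rank}(M)$ with $M=T-\mathbf{a}\mathbf{b}^{\top}$ follows from block row reduction of $R$; the steered state equals $\rho_B$ iff the measurement Bloch vector lies in the left null space of $M$; $\chi=0$ iff all positive-probability steered states coincide (equality case of concavity of the von Neumann entropy), hence iff the basis's Bloch vectors all lie in that null space; and mutual unbiasedness is exactly Bloch orthogonality. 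Your secondary worry about the MUB sweep is not actually an obstacle: any vector $\ket{\psi}=\tfrac{1}{\sqrt{d_A}}\sum_i e^{i\phi_i}\ket{i}$ unbiased to $B_1$ extends to a full basis unbiased to $B_1$ via the circulant completion $\ket{a_j}=\tfrac{1}{\sqrt{d_A}}\sum_i e^{i\phi_i}\omega^{ij}\ket{i}$ with $\omega=e^{2\pi i/d_A}$, and these vectors span all of $V_1^{\perp}$, so no prime-power assumption is needed.

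The genuine gaps are the two you flag yourself. First, the forward direction really does stop at $L_R\le d_A$, and this is not a removable technicality: for $d_A>d_B$ the statement with $d_{\min}$ appears to fail outright --- a classical-quantum state $\tfrac13\sum_{i=1}^3\ket{i}\bra{i}\otimes\rho^{(i)}_B$ in $\mathbb{C}^3\otimes\mathbb{C}^2$ with three linearly independent $\rho^{(i)}_B$ has $Q^{\rightarrow}_2=0$ (its unique $C^{\rightarrow}_1$-basis is the classical one, and every basis unbiased to it steers only to $\rho_B$), yet $L_R=3>d_{\min}=2$. So the lemma should be read with $d_A\le d_B$, or with $d_{\min}$ replaced by $d_A$, which is all the paper's two-qubit applications require. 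Second, the converse direction is only a sketch: the crux is to show that whenever $\mathrm{rank}(M)\le d_A-1$, \emph{every} $C^{\rightarrow}_1$-basis has its Bloch span containing the row space of $M$, so that all of its MUB partners land in the left null space. You replace this with an appeal to a converse of the cited $L_R\le d_{\min}$ criterion plus preservation of $Q^{\rightarrow}_2=0$ under $\Phi_A\otimes\Phi_B$, neither of which is established; in particular $\Phi_A$ turns projective measurements into POVMs on the pre-image and shifts the $C^{\rightarrow}_1$ optimization, so the ``one then verifies'' step is doing real, unperformed work. To be fair, the paper's own two-line proof addresses neither direction at this level of detail, so these are gaps in the lemma's justification generally rather than defects peculiar to your route.
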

\begin{proof}
If a discordant state has $L_R \le d_{\min}$, then it has correlation  in a single basis. On the other hand, if a discordant state has $Q^{\rightarrow}_2=0$, it has
$L_R \le d_{\min}$ since it has correlation in only a single basis.
\end{proof}

From the above lemma, we obtain the following result.
\begin{thm}
A bipartite state in $\mathbb{C}^{d_A}\otimes\mathbb{C}^{d_B}$ has $Q^{\rightarrow}_2>0$  if and only if it has global coherence, i.e., $L_R > d_{\min}$.
\end{thm}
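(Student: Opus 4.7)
The plan is to obtain the theorem as a direct consequence of Lemma~1 together with Observation~1. Lemma~1 establishes the biconditional $Q^{\rightarrow}_2=0 \iff L_R \le d_{\min}$; taking the contrapositive of both directions immediately yields
\[ Q^{\rightarrow}_2 > 0 \iff L_R > d_{\min}, \]
which is the arithmetic content of the theorem. The only remaining task is to verify that the condition $L_R>d_{\min}$ coincides with ``having global coherence'' in the sense of Definition~3 for the entire class of bipartite states, not only for separable ones.

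For separable states this identification is precisely Observation~1, so no additional work is required in that case. For entangled states I would argue separately: since a local operation $\Phi_A\otimes\Phi_B$ applied to any separable state (and in particular to a zero-discord state) can never produce entanglement, an entangled state's discord cannot be generated locally from a discord-free state, so entangled states automatically satisfy the definition of global coherence. Combined with the characterization from Ref.~\cite{BPP15} that every state with locally-generable discord obeys $L_R \le d_{\min}$, entangled states must also satisfy $L_R>d_{\min}$. Concatenating the separable and entangled cases then gives the global equivalence ``$L_R>d_{\min}$ iff global coherence'', and combining it with the contrapositive of Lemma~1 finishes the argument.

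The step that needs the most care is this entangled-state leg, since it steps outside Lemma~1 and relies on the structural results of Ref.~\cite{BPP15} and on the non-generation of entanglement under $\Phi_A\otimes\Phi_B$; the algebraic content is otherwise immediate. In essence, the theorem is a compact repackaging of Lemma~1 together with Observation~1, promoted from the separable subclass to all bipartite states by invoking the fact that entanglement cannot be locally created.
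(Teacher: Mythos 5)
Your overall route is essentially the paper's: the theorem is obtained as the contrapositive of Lemma~1 combined with the identification of global coherence with $L_R > d_{\min}$. The paper's own proof is in fact terser than yours --- it says only that $L_R > d_{\min}$ implies correlation in two bases and hence $Q^{\rightarrow}_2>0$, leaving the converse to Lemma~1 and taking the identification of global coherence with $L_R>d_{\min}$ for granted via the ``i.e.''\ in the statement. Your instinct to check that this identification extends beyond the separable case covered by Observation~1 is sound, since Observation~1 is stated only for separable states. However, the entangled-state leg as written contains a direction-of-implication error: from ``every state with locally generable discord obeys $L_R \le d_{\min}$'' together with ``entangled states do not have locally generable discord'' you cannot conclude ``entangled states satisfy $L_R > d_{\min}$'' --- that is denying the antecedent. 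What you actually need is the converse implication, namely that $L_R \le d_{\min}$ forces the bipartite coherence (equivalently, the discord) to be either absent or locally generable by some $\Phi_A\otimes\Phi_B$; since such maps cannot create entanglement, this converse is what forces every entangled state to have $L_R > d_{\min}$. The paper asserts precisely this converse in the sentence following Observation~1 (``for any bipartite coherent state with $L_R \le d_{\min}$, its bipartite coherence is present locally in a subsystem or both subsystems''), so the repair is simply to invoke that statement in place of the forward implication you attribute to Ref.~\cite{BPP15}. With that substitution your argument closes and is, if anything, more careful than the printed proof.
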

\begin{proof}
Since $L_R > d_{\min}$ implies that it has correlation in two bases, it follows that any discordant state with global coherence has $Q^{\rightarrow}_2>0$.
\end{proof}

\begin{cor}\label{corQ21SSDI}
Any nonzero $Q^{\rightarrow}_2$ of a bipartite state in $\mathbb{C}^{d_A}\otimes\mathbb{C}^{d_B}$ can be used to demonstrate 1SSDI steering from Alice to Bob in a two-setting steering scenario.
\end{cor}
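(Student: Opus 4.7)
The plan is to prove the contrapositive. Suppose $\rho_{AB}$ cannot be used to demonstrate 1SSDI steering from Alice to Bob in any two-setting scenario; in particular, consider the scenario where Alice measures in her $C^{\rightarrow}_{1}$-basis and her $Q^{\rightarrow}_{2}$-basis, which are MUBs on Alice's side by construction. Under the contradiction hypothesis, the induced assemblage $\{\sigma_{a|x}\}_{a,x}$ admits an LHS decomposition with $d_\lambda \le d_A$ as in Eq.~(\ref{LHSdl}); equivalently, Bob's conditional ensembles from both Alice measurements are coarse-grainings of a common ensemble $\{p(\lambda),\rho_\lambda\}$ of at most $d_A$ states that reproduces $\rho_B$.

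The core step is to turn this LHS constraint into a bound on the state's correlation rank. I would argue that the existence of a $d_\lambda \le d_A$ common ensemble reproducing correlations along two MUBs on Alice's side forces $\rho_{AB}$ to have $L_R \le d_{\min}$ in the decomposition of Eq.~(\ref{LR}). The clean route is to invoke the result cited from Ref.~\cite{JKC+24}: global coherence ($L_R > d_{\min}$) is necessary for superunsteerability, and a $d_\lambda \le d_A$ LHS model is precisely the absence of 1SSDI steerability (and hence also of superunsteerability) in this scenario. Once $L_R \le d_{\min}$ is established, Theorem 1 immediately gives $Q^{\rightarrow}_{2} = 0$, completing the contrapositive.

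The hard part will be rigorously justifying the step from ``the two-MUB assemblage has an LHS with $d_\lambda \le d_A$'' to $L_R \le d_{\min}$. The result cited from Ref.~\cite{JKC+24} is phrased for superunsteerability in general, so the subtlety is to verify that the two-setting MUB scenario above is sufficient to witness any superunsteerability $\rho_{AB}$ could exhibit. If the black-box appeal to Ref.~\cite{JKC+24} does not cover this case, a direct argument would expand the LHS model into a product-operator representation of $\rho_{AB}$ using the effect operators associated with Alice's MUB measurements, then count terms to bound $L_R$ by $d_{\min}$ and derive a contradiction with global coherence.
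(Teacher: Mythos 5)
Your overall contrapositive structure matches the paper's, but the route you propose through the correlation rank has a genuine gap at exactly the step you yourself flag as ``the hard part,'' and neither of the tools you name can close it.

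First, the appeal to Ref.~\cite{JKC+24} points the wrong way. That result says global coherence ($L_R > d_{\min}$) is \emph{necessary} for superunsteerability, i.e., superunsteerability $\Rightarrow L_R > d_{\min}$; its contrapositive is $L_R \le d_{\min} \Rightarrow$ no superunsteerability. What you need is the converse: an LHS model with $d_\lambda \le d_A$ for one particular two-setting assemblage $\Rightarrow L_R \le d_{\min}$. That is not supplied by the citation. Worse, by Theorem~1 the implication you are trying to establish is logically \emph{equivalent} to the contrapositive of the corollary itself (since $L_R \le d_{\min} \Leftrightarrow Q^{\rightarrow}_2 = 0$), so routing through $L_R$ restates the problem rather than solving it. Your fallback ``direct argument'' also cannot work as described: the LHS model in Eq.~(\ref{LHSdl}) reproduces only the assemblage for the two chosen measurements, and two measurements on Alice's side are not tomographically complete, so they do not determine a product-operator decomposition of $\rho_{AB}$ from which $L_R$ could be counted. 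A state can perfectly well have $L_R > d_{\min}$ while some particular pair of its induced ensembles admits a low-dimensional classical simulation; excluding this for the $C^{\rightarrow}_1$-basis/$Q^{\rightarrow}_2$-basis pair is the entire content of the corollary.

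The paper closes this gap with a different lemma that your proposal is missing: any assemblage admitting an LHS model with $d_\lambda \le d_A$ can be reproduced by a classical-quantum state in $\mathbb{C}^{d_A}\otimes\mathbb{C}^{d_B}$ \cite{JD23,JKC+24}. Since a CQ state carries correlation in only a single basis (measuring Alice in any basis unbiased to its classical basis yields a constant ensemble and hence zero Holevo quantity), it cannot reproduce an assemblage in which both MUB measurements have nonzero Holevo quantity --- which is exactly what $Q^{\rightarrow}_2 > 0$ guarantees for the $C^{\rightarrow}_1$- and $Q^{\rightarrow}_2$-bases. To repair your plan, drop the $L_R$ detour and substitute this CQ-simulation lemma as the bridge from the bounded-dimension LHS model to the vanishing of the second Holevo quantity.
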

\begin{proof}
 To prove this, we use the following facts.
 If an assemblage $\{\sigma_{a|x}\}$ has an LHS model as in Eq. (\ref{LHSdl}),
 then it can be reproduced using a CQ state in $\mathbb{C}^{d_A}\otimes\mathbb{C}^{d_B}$ \cite{JD23,JKC+24}. Note that any CQ state has correlation in only a single basis. Now, consider the assemblage $\{\sigma_{a|x}\}$ arising from any bipartite state in $\mathbb{C}^{d_A}\otimes\mathbb{C}^{d_B}$ with a nonzero $Q^{\rightarrow}_2$ for two MUBs that give rise to $Q^{\rightarrow}_2>0$.
Such an assemblage cannot be reproduced using a CQ state   in $\mathbb{C}^{d_A}\otimes\mathbb{C}^{d_B}$ because the CQ state has correlation in only a single basis. Therefore, any nonzero $Q^{\rightarrow}_2$ can be used to demonstrate $1$SSDI  steering. 
\end{proof}
Thus, any nonzero $Q^{\rightarrow}_2$ has an operational characterization in a two-setting steering scenario to demonstrate $1$SSDI steering.

\subsection{Operational SCMUB of two-qubit states}

\subsubsection{Bell-diagonal states}
We study the quantification of 1SSDI steerability in Bell-diagonal two-qubit states and their relationships with the quantification of SCMUB of these states. For this purpose, we use the quantification of superunsteerability using ${\rm SS}_n$ mentioned earlier to quantify 1SSDI steerability in the two-setting and three-setting scenarios and show that they provide an operational characterization of $Q^{\rightarrow}_2$ and $Q^{\rightarrow}_3$, respectively.

For the Bell-diagonal states, the values of quantifiers in $\{C_1^{\rightarrow},Q^{\rightarrow}_2, Q^{\rightarrow}_3\}$ are given by Eqs. (\ref{C1BD}), (\ref{Q2X}) and (\ref{Q3X}), respectively.
We now proceed to derive analytical relationships between the measures of SCMUB and Schr\"{o}dinger strengths of $\tau$ as given by Eqs. (\ref{SSBD2}) and (\ref{SSBD3}).

Using Eqs. (\ref{SSBD2}) and (\ref{Q2X}), it is seen that ${\rm SS}_{2}(\tau)$ and $Q^{\rightarrow}_2(\tau)$ are related to each other as follows:
\be
\label{Q2SS2}
Q^{\rightarrow}_2(\tau)=1-h\left(\frac{1+{\rm SS}_{2}(\tau)}{2}\right).
\ee
On the other hand,
 the following relationship can be easily obtained using Eqs. (\ref{SSBD3}) and (\ref{Q3X}):
\be \label{SSQ3X}
Q^{\rightarrow}_3(\tau)=1-h\left(\frac{1+{\rm SS}_{3}(\tau)}{2}\right).
\ee

From the above two analytical relationships, we have obtained the following result.
\begin{result}
  $Q^\rightarrow_2(\tau)$ and  $Q^\rightarrow_3(\tau)$ are a monotonically increasing function of ${\rm SS}_{2}(\tau)$ and ${\rm SS}_{3}(\tau)$, respectively.
\end{result}
\begin{proof}
Note that if a function $f(x)$ is continuous and differentiable in any given interval $a \leq x \leq b$, then $f(x)$ is a monotonically increasing function of $x$ provided the derivative of $f(x)$ with respect to $x$ is not negative, i.e., $\frac{df(x)}{dx} \geq 0$ for all $x \in [a,b]$. 
 By differentiating the left hand side of Eq. (\ref{Q2SS2}) with respect to ${\rm SS}_{2}(\tau)$, it  can be checked that $\frac{d Q^\rightarrow_{2}}{d {\rm SS}_{2}(\tau)} \geq 0$ for all values of ${\rm SS}_{2}(\tau)$ which implies that $Q^\rightarrow_{2}$ is a monotonically increasing function of ${\rm SS}_{2}(\tau)$.
  Similar to the above two-setting case, it can be checked from Eq. (\ref{SSQ3X}) that the first derivative of $Q^{\rightarrow}_3$ with respect to ${\rm SS}_{3}(\tau)$ is non-negative, i.e., $\frac{d\mathcal{Q}_{3}}{d{\rm SS}_{3}(\tau)}\geq 0$ for all values of ${\rm SS}_{3}(\tau)$;  i.e., in the three-setting case too, $Q^{\rightarrow}_3$ is a monotonically increasing function of ${\rm SS}_{3}(\tau)$.
  \end{proof}
Thus, we have shown that the   Schr\"{o}dinger strengths ${\rm SS}_{2}(\tau)$ and ${\rm SS}_{3}(\tau)$
provide operational characterization of  $Q^{\rightarrow}_2(\tau)$ and $Q^{\rightarrow}_3(\tau)$, respectively.  
    
  We visualize the above result using QSE formalism to understand it geometrically.   For the Bell-diagonal states (\ref{BD}), the QSE of Bob  is given by
\begin{align}
\frac{x^2}{\ell_1^2}+\frac{y^2}{\ell_2^2}+\frac{z^2}{\ell_3^2}=1,
\end{align}
where $\ell_1=c_1$, $\ell_2=c_2$,
and $\ell_3=c_3$. The center of the ellipsoid is at the origin.
The $x$, $y$ and the $z$ semi-axes lengths of the ellipsoid is given by
\begin{align}
|l_1|&=|c_1|, \quad |l_2|=|c_2|, \quad |l_3|=|c_3|.
\end{align}
The reduced state $\rho^B$ corresponds to the point $B=(0,0,0)$, which is the origin. The above steering ellipsoid's normalized volume as defined in Ref. \cite{CMHW16} is given by $v_{B|A}=|c_1c_2c_3|$.
Note that if $c_1 -c_2=c_1+c_2$ or $(1+c_3)^2=(1-c_3)^2$, the ellipsoid degenerates to an ellipse or a line segment. 

\begin{figure}[t!]
\begin{center}
\includegraphics[width=8cm]{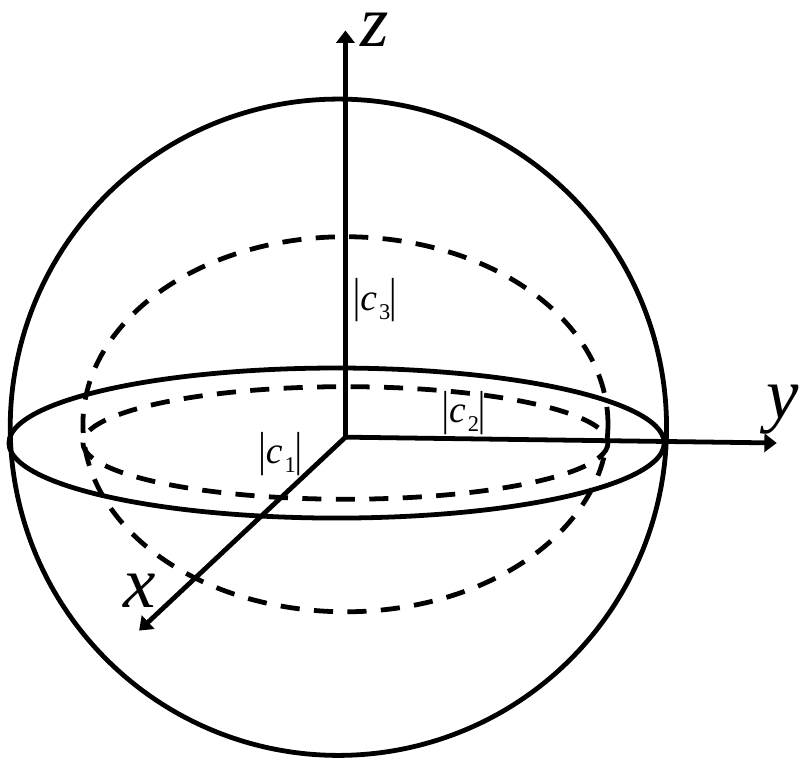}
\end{center}
\caption{The dashed ellipsoid inside the Bloch sphere represents QSE of the Bell-diagonal states with the semi-axes lengths satisfy $|c_1| \ge |c_2| \ge |c_3|$. Quantifiers $C_1^{\rightarrow}(\tau)$,  $Q^\rightarrow_2(\tau)$ (${\rm SS}_{2}(\tau)$), and $Q^\rightarrow_3(\tau)$ (${\rm SS}_{3}(\tau)$)  are a monotonically increasing function of the lengths of these three semi-axes, respectively.
\label{Fig:QSE_BD}}
\end{figure}
 Since $\frac{dC_1^{\rightarrow}(\tau)}{d|l_1|} \ge 0$, $\frac{dQ^\rightarrow_2(\tau)}{d|l_2|} \ge 0$
and $\frac{dQ^{\rightarrow}_3(\tau)}{d|l_3|} \ge 0$ for all values of $|l_i|$,
with $i=1,2,3$, respectively,  we have obtained the following observation.
\begin{observation} \label{obsmono}
    $C_1^{\rightarrow}(\tau)$, $Q^\rightarrow_2(\tau)$ (or ${\rm SS}_{2}(\tau)$) and  $Q^\rightarrow_3(\tau)$ (or ${\rm SS}_{3}(\tau)$) are a monotonically increasing function of the lengths of the semi-axes of the QSE in descending order, respectively.
\end{observation}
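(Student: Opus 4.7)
The observation is essentially a direct consequence of the explicit formulas already assembled in the preliminaries, so my plan is to reduce it to a single one-variable monotonicity check and then apply that check three times.

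The plan is first to unpack what has to be shown. From the QSE discussion, the semi-axis lengths of the Bell-diagonal state satisfy $|l_i|=|c_i|$ for $i=1,2,3$, with $|c_1|\ge|c_2|\ge|c_3|$. Combining Eqs.~(\ref{C1BD}), (\ref{Q2X}), (\ref{Q3X}) with Propositions \ref{SSBD2}, \ref{SSBD3}, each of the five quantities $C_1^{\rightarrow}(\tau)$, $Q_2^{\rightarrow}(\tau)$, $Q_3^{\rightarrow}(\tau)$, $\mathrm{SS}_2(\tau)$, $\mathrm{SS}_3(\tau)$ depends on $\tau$ only through a single $|c_i|$. Writing $g(x):=1-h\bigl((1+x)/2\bigr)$, I have $C_1^{\rightarrow}(\tau)=g(|l_1|)$, $Q_2^{\rightarrow}(\tau)=g(|l_2|)$, $Q_3^{\rightarrow}(\tau)=g(|l_3|)$, while $\mathrm{SS}_2(\tau)=|l_2|$ and $\mathrm{SS}_3(\tau)=|l_3|$ are already the identity on the corresponding semi-axis length. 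So the observation reduces to showing that $g$ is nondecreasing on $[0,1]$.

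The second step is the single one-variable calculation. Using $h(p)=-p\log_2 p-(1-p)\log_2(1-p)$, I differentiate $g(x)=1-h((1+x)/2)$ via the chain rule to obtain
\begin{equation}
\frac{dg}{dx}=\tfrac{1}{2}\log_2\!\frac{1+x}{1-x},
\end{equation}
which is nonnegative for $x\in[0,1)$ (and the limit at $x=1$ is $+\infty$). This gives $\tfrac{d C_1^{\rightarrow}(\tau)}{d|l_1|}\ge 0$, $\tfrac{d Q_2^{\rightarrow}(\tau)}{d|l_2|}\ge 0$, $\tfrac{d Q_3^{\rightarrow}(\tau)}{d|l_3|}\ge 0$, exactly as asserted just before the observation. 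For the Schr\"odinger-strength restatements the derivatives are trivially $1$.

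The third step is simply to assemble the conclusion: since $|l_1|\ge|l_2|\ge|l_3|$ is the descending order of the semi-axes, and the three quantifiers in the displayed order depend monotonically (through $g$, or as the identity for $\mathrm{SS}_n$) on the respective semi-axis length, each one is a monotonically increasing function of the corresponding semi-axis length. There is no real obstacle here; the only thing to be careful about is that the formulas in Eqs.~(\ref{C1BD})--(\ref{Q3X}) really do match the semi-axes of the QSE in the stated descending order, which is guaranteed by the convention $|c_1|\ge|c_2|\ge|c_3|$ fixed earlier in the paper. I would also briefly note that Result~1 already handles the $Q_2^{\rightarrow}$--$\mathrm{SS}_2$ and $Q_3^{\rightarrow}$--$\mathrm{SS}_3$ monotonicity, so the observation is essentially a geometric repackaging of that result together with the analogous statement for $C_1^{\rightarrow}$ versus $|l_1|=|c_1|$.
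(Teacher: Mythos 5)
Your proof is correct and follows essentially the same route as the paper: the paper likewise identifies the semi-axis lengths with $|c_1|\ge|c_2|\ge|c_3|$, notes that each quantifier depends on $\tau$ only through the corresponding $|c_i|$ via $1-h\bigl((1+|c_i|)/2\bigr)$ (or the identity for ${\rm SS}_n$), and concludes from the nonnegativity of the derivatives $dC_1^{\rightarrow}/d|l_1|$, $dQ_2^{\rightarrow}/d|l_2|$, $dQ_3^{\rightarrow}/d|l_3|$. Your explicit computation $\frac{1}{2}\log_2\frac{1+x}{1-x}\ge 0$ just makes concrete the derivative check the paper asserts.
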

Also, note that $Q^\rightarrow_3(\tau)>0$ or ${\rm SS}_{3}(\tau)>0$ if and only if the QSE of Bob has a nonvanishing volume. See Fig. \ref{NewFig} for the illustration of the above observation. 

To study the monotonic relationships  mentioned in the Observation \ref{obsmono}, let us consider the following two-parameter family of Bell-diagonal states:
\begin{align}\label{BD2para}
\tau(p,q)&=p \ket{\beta_{00}}\bra{\beta_{00}} +\frac{q}{2}( \ket{\beta_{10}}\bra{\beta_{10}}+\ket{\beta_{11}}\bra{\beta_{11}}) \nonumber \\
&+\frac{(1-p-q)}{4}\sum_{ab}\ket{\beta_{ab}}\bra{\beta_{ab}},
\end{align}
with $0\leq p, q \leq 1$, $p \ge q$ $p+q\leq 1$. The state $\tau(p,q)$ is separable if $(3p-q) \le 1$, otherwise, it is entangled \cite{Per96, HHH96} and Wootters’s concurrence \cite{Woo98} of $\tau(p,q)$ is given by $\min\{0,(3p-q-1)/2\}$. For this subclass of Bell-diagonal states, the classical correlation and the discord are given by
\begin{align}
\label{BD2cd}
\begin{split}
C^{\rightarrow}_1(\tau(p,q))&=1-h\left(\frac{1+p}{2}\right), \\
D^{\rightarrow}(\tau(p,q))&=\sum_{ab} \lambda_{ab} \log_2(4\lambda_{ab})  -C^{\rightarrow}_1(\tau(p,q)),
\end{split}
\end{align}
where $\lambda_{00}=(1+3p-q)/4$, $\lambda_{01}=\lambda_{10}=(1-p+q)/4$ and $\lambda_{11}=(1-p-q)/4$, and the SCMUB measures are given by
\begin{align}
\label{BD2cqs}
\begin{split}
Q^{\rightarrow}_2(\tau(p,q))&=1-h\left(\frac{1+{\rm SS}_{2}(\tau(p,q))}{2}\right), \\
Q^{\rightarrow}_3(\tau(p,q))&=1-h\left(\frac{1+{\rm SS}_{3}(\tau(p,q))}{2}\right), 
\end{split}
\end{align}
where ${\rm SS}_{2}(\tau(p,q))=p$ and ${\rm SS}_{3}(\tau(p,q))=p-q$.  
See Figs. \ref{Fig:QSE_BD2} and \ref{NewFig} for the illustration of the Observation \ref{obsmono} in the context of the family of states given by Eq. (\ref{BD2para}). 

For $p>q$ and $(3p-q) \le 1$, $\tau(p,q)$ is separable, but  has a nonzero QSE volume and has a nonzero $Q^{\rightarrow}_2(\tau(p,q))$ as well as a nonzero $Q^{\rightarrow}_3(\tau(p,q))$, which are a monotonically increasing function of ${\rm SS}_{2}(\tau(p,q))$ and ${\rm SS}_{3}(\tau(p,q))$, respectively, as in Eq. (\ref{BD2cqs}).  On the other hand, for $p=q$, $\tau(p,q)$ has vanishing QSE volume, but the discord $D^{\rightarrow}(\tau(p,q))$ is nonzero if $p \ne 0$  and has SCMUB  only by a nonzero $Q^{\rightarrow}_2(\tau(p,q))$.

\begin{figure}[t!]
\begin{center}
\includegraphics[width=8cm]{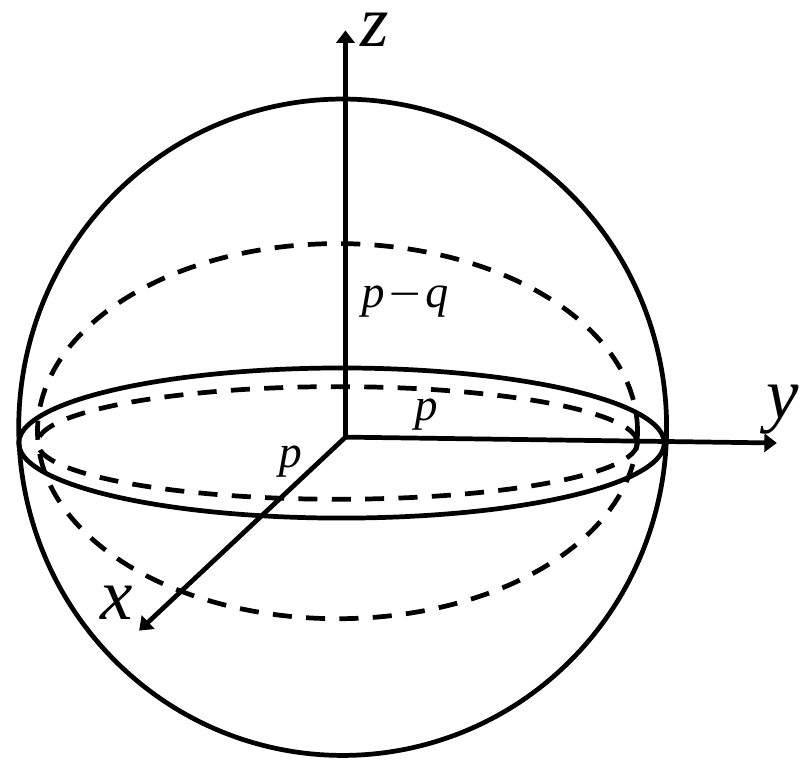}
\end{center}
\caption{The dashed ellipsoid inside the Bloch sphere represents QSE of the two-parameter Bell-diagonal states given by Eq. (\ref{BD2para}). 
\label{Fig:QSE_BD2}}
\end{figure}

\begin{figure}[t!]
\begin{center}
\includegraphics[width=8.5cm]{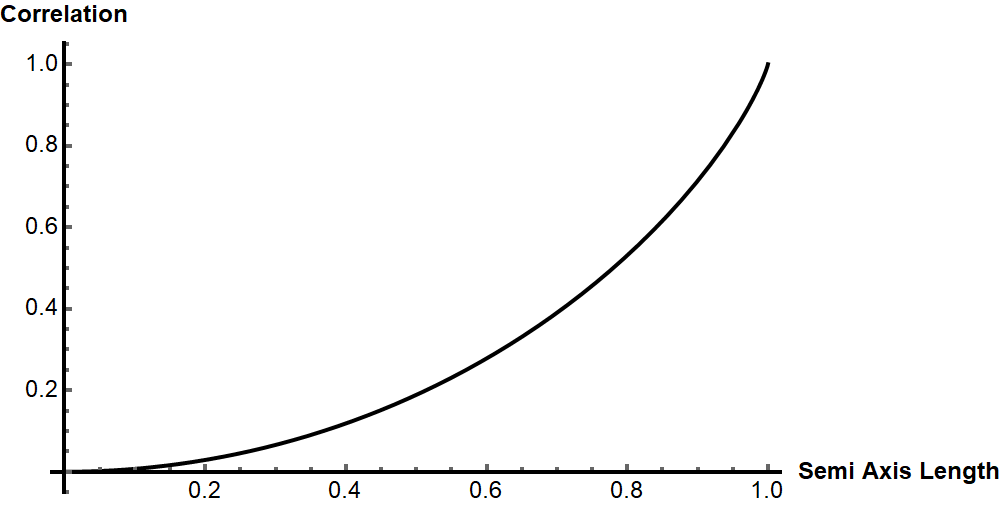} 
\end{center}
\caption{The variation of three different types of correlations $C_1^{\rightarrow}(\tau(p,q)$,  $Q^\rightarrow_2(\tau(p,q))$, and $Q^\rightarrow_3(\tau(p,q))$) with respect to lengths of the three semi-axes ($p$, $p$ and $p-q$ respectively) of the QSE. The same figure depicts the plot of (1) $C_1^{\rightarrow}(\tau(p,q))$ versus $p$, (2) $Q^\rightarrow_2(\tau(p,q))$ versus $p$, and (3) $Q^\rightarrow_3(\tau(p,q))$ versus $p-q$. This shows that the correlations $C_1^{\rightarrow}(\tau(p,q))$,  $Q^\rightarrow_2(\tau(p,q))$, and $Q^\rightarrow_3(\tau(p,q))$   are monotonically increasing functions of the lengths of these three semi-axes respectively.}
\label{NewFig}
\end{figure}

\subsubsection{Classification of  two-qubit separable states}
We  provide a different classification of two-qubit discordant states based on SCMUB quantified by $Q^{\rightarrow}_2$ and $Q^{\rightarrow}_3$. We also study its relationship with ${\rm SS}_2$ and ${\rm SS}_3$, respectively. To this end, we use the QSE formalism to give a geometric understanding of the classification.

Any separable two-qubit state can be written as a convex mixture of at most four product states as follows \cite{JPJR14}:
\be \label{sep2q}
\rho_{AB}=\sum^n_{i=1} p_i \alpha_i \otimes \beta_i,
\ee
with $n \le 4$. 
Any separable state with $n \le 2$ in Eq. (\ref{sep2q}) cannot be used to demonstrate 1SSDI steerability. This follows that any such state always has an LHV-LHS model as in Eq. (\ref{LHV-LHS}) with $d_\lambda \le 2$. Also, such states do not have global coherence, which is required to demonstrate 1SSDI steerability \cite{JKC+24}. 
We now have obtained the following result.
\begin{result} \label{1dimq20}
    For any two-qubit discordant state whose QSE is $1$-dimensional, i.e., a separable two-qubit state of the form given by Eq. (\ref{sep2q}) with $n=2$, the SCMUB  captured by $Q^\rightarrow_2$ vanishes.
\end{result}
\begin{proof}
  This follows because any two-qubit discordant state whose QSE is $1$-dimensional does not have global coherence, implying that it has correlation in only a single basis. Whereas, for a nonzero $Q^\rightarrow_2$,
  the discordant state should have correlation in more than one basis.
  \end{proof}
  To illustrate the above result, consider the following  rank-$2$ separable state:
  \begin{equation}
\label{stateowsu}
\rho^{\texttt{rank}-2}_{1\texttt{way}D} = \frac{1}{2} \Big( |00\rangle \langle 00| + |+1 \rangle \langle +1| \Big) ,
\end{equation}
where, $|0\rangle$ and $|1\rangle$ are the eigenstates of the operator $\sigma_3$ corresponding to the eigenvalue $+1$ and $-1$ respectively; $|+\rangle$ is the eigenstate of the operator $\sigma_1$ corresponding to the eigenvalue $+1$. The above state has the discord as given by  $D^{\rightarrow}(\rho^{\texttt{rank}-2}_{1\texttt{way}D}) \approx  0.2018$ \cite{BPP15}
and $D^{\leftarrow}(\rho_{AB})=0$  since it is not a CQ state but a quantum-classical state \cite{HV01,OZ01}. Hence, it is one-way discordant ($1\texttt{way}D$).
On the other hand, if we consider the following another rank-$2$ separable state:
\begin{equation}
\label{statesu3}
\rho^{\texttt{rank}-2}_{2\texttt{way}D} = \frac{1}{2} \Big( |00\rangle \langle 00| + |++ \rangle \langle ++| \Big),
\end{equation}
the state has a nonzero quantum discord value from both Alice to Bob and Bob to Alice as $D^{\rightarrow}(\rho^{\texttt{rank}-2}_{2\texttt{way}D})=D^{\leftarrow}(\rho^{\texttt{rank}-2}_{2\texttt{way}D}) \approx  0.1442$ \cite{SYF+11}. Hence, it is two-way discordant ($1\texttt{way}D$). In the context of Result \ref{1dimq20}, we now show the following with explicit calculations.
\begin{observation}
 For the states $\rho^{\texttt{rank}-2}_{1\texttt{way}D}$  and $\rho^{\texttt{rank}-2}_{2\texttt{way}D}$, the SCMUB measure $Q^\rightarrow_2=0$.
\end{observation}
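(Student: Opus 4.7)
The plan is to combine Lemma~1 with a direct computation of the Holevo quantity in Alice's mutually unbiased bases. Since both states are displayed as convex mixtures of exactly two pure product operators, they fit Eq.~(\ref{sep2q}) with $n=2$, so the correlation rank in Eq.~(\ref{LR}) satisfies $L_R\le 2=d_{\min}$ for two qubits. Lemma~1 then immediately gives $Q^{\rightarrow}_2=0$, consistent with Result~\ref{1dimq20}: Bob's QSE in each case degenerates to a $1$-dimensional line segment (between $|0\rangle,|1\rangle$ for the first state and between $|0\rangle,|+\rangle$ for the second).

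For the ``explicit calculations'' the observation promises, my plan is: (i) extract $(\mf{a},\mf{b},T)$ via Eq.~(\ref{notation}); (ii) identify the $C^{\rightarrow}_1$-basis on Alice by maximizing the Holevo quantity over a unit Bloch direction $\hat{n}$; (iii) enumerate the two Bloch axes MUB to $\hat{n}^{\ast}$; and (iv) substitute each into the steered-state formula~(\ref{reducedrho}) and verify that the Holevo quantity vanishes. Step~(i) gives $\mf{a}=(1/2,0,1/2)$ and $\mf{b}=\mathbf{0}$ with $T$ supported only in its third column ($T_{13}=-1/2$, $T_{33}=1/2$) for $\rho^{\texttt{rank}-2}_{1\texttt{way}D}$, and $\mf{a}=\mf{b}=(1/2,0,1/2)$ with $T=\mathrm{diag}(1/2,0,1/2)$ for $\rho^{\texttt{rank}-2}_{2\texttt{way}D}$. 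In both states the $\sigma_2$-direction is absent from $\mf{a}$, $\mf{b}$, and $T$, so Step~(ii) reduces to a one-parameter search in the $xz$-plane that locates $\hat{n}^{\ast}=(1,0,-1)/\sqrt{2}$ (perpendicular to $\mf{a}$); the two MUB directions are then $\hat{m}_1=(1,0,1)/\sqrt{2}$ and $\hat{m}_2=\hat{y}$.

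For $\hat{y}$, Step~(iv) is immediate: $T^{\top}\hat{y}=\mathbf{0}$ and $\mf{a}\cdot\hat{y}=0$, so Eq.~(\ref{reducedrho}) collapses to $\rho^{M_{\pm}}_B=\tfrac{1}{2}(\id+\mf{b}\cdot\boldsymbol{\sigma})=\rho_B$ for both outcomes. For $\hat{m}_1$, substitution shows $(\mf{b}+T^{\top}(\pm\hat{m}_1))/(1+\mf{a}\cdot(\pm\hat{m}_1))=\mf{b}$: in $\rho^{\texttt{rank}-2}_{1\texttt{way}D}$ because the two nonzero entries of $T^{\top}\hat{m}_1$ cancel and $\mf{b}=\mathbf{0}$; in $\rho^{\texttt{rank}-2}_{2\texttt{way}D}$ because $\hat{m}_1\parallel\mf{a}$ forces $\mf{b}+T^{\top}(\pm\hat{m}_1)$ and $1+\mf{a}\cdot(\pm\hat{m}_1)$ to scale together so the quotient returns exactly $\mf{b}$. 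Hence both MUB measurements produce outcome-independent steered states and zero Holevo, yielding $Q^{\rightarrow}_2=0$ via Eq.~(\ref{defQ2}).

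The main obstacle is Step~(ii): the Holevo-optimization is entropic and transcendental, with no generic closed form in $\theta$. This is finessed by the $\sigma_2$-decoupling noted above, which pins $\hat{n}^{\ast}$ into the $xz$-plane without requiring the optimization to be solved; $\hat{y}$ is then automatically a MUB to $\hat{n}^{\ast}$, and $T^{\top}\hat{y}=\mathbf{0}$ alone already forces its Holevo contribution to vanish. Combined with Lemma~1 controlling the remaining in-plane MUB, the conclusion $Q^{\rightarrow}_2=0$ can be obtained without ever computing $\hat{n}^{\ast}$ explicitly.
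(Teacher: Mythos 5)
Your proposal is correct, and for the ``explicit calculation'' part it follows essentially the same route as the paper: identify the $C^{\rightarrow}_1$-basis direction $(\hat{x}-\hat{z})/\sqrt{2}$ and verify that the Holevo quantity vanishes for both directions mutually unbiased to it. The differences are worth noting. First, you add an immediate shortcut the paper does not use here: since both states are convex mixtures of two product operators, $L_R\le 2=d_{\min}$, so Lemma~1 (equivalently Result~\ref{1dimq20}) already gives $Q^{\rightarrow}_2=0$ without any Holevo computation; this is the cleanest and fully rigorous route. Second, where the paper writes ``it has been checked'' and cites the reference for the optimal basis, you actually supply the check: your $(\mf{a},\mf{b},T)$ data are correct, and the steered-state formula~(\ref{reducedrho}) does show that for both $\hat{y}$ and $(\hat{x}+\hat{z})/\sqrt{2}$ the conditional states collapse to $\rho_B$ independently of the outcome (for the first state because $T^{\top}\hat{m}$ vanishes, for the second because the two product states respond identically to those measurements), so the Holevo quantities are exactly zero. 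The one soft spot is your step~(ii): the $\sigma_2$-decoupling shows that $\chi$ depends only on the $xz$-projection of Alice's measurement direction, but by itself it does not prove that the maximum over the unit sphere is attained at a \emph{unit} vector in the $xz$-plane rather than at an interior point of the projected disc, nor does it single out $(\hat{x}-\hat{z})/\sqrt{2}$; the paper closes this by citing the known optimal bases for these rank-$2$ states. Since your Lemma~1 argument is independent of this identification, the conclusion $Q^{\rightarrow}_2=0$ stands regardless, but if you want the explicit MUB verification to be self-contained you should either cite the optimal-basis result as the paper does or prove that the maximizer lies on the boundary circle.
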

\begin{proof}
As shown in Ref. \cite{SYF+11}, the optimal bases that give rise to $C^\rightarrow_1$ of such rank-$2$ states are the bases of the observables $\frac{\mp\sigma_z\pm \sigma_x}{\sqrt{2}}$. For such basis, the classical correlations of the states are obtained as $C^\rightarrow_1(\rho^{\texttt{rank}-2}_{1\texttt{way}D})=1-h\left(\frac{1+1/\sqrt{2}}{2}\right)$ and $C^\rightarrow_1(\rho^{\texttt{rank}-2}_{2\texttt{way}D})=h\left(\frac{2+\sqrt{2}}{4}\right)-h\left(\frac{1+\sqrt{3}/2}{2}\right)$, respectively. On the other hand, the Holevo quantity of the ensemble obtained from these states for a basis that is mutually unbiased to the above-mentioned $C^\rightarrow_1$-bases of the state vanishes. For instance, for the basis associated with the observable $\frac{\sigma_z + \sigma_x}{\sqrt{2}}$, it has been checked that these states have
$\chi\Big(\rho^{\texttt{rank}-2}_{1\texttt{way}D}\Big|\Big\{\frac{1}{2}\Big(\one \pm \frac{\sigma_z + \sigma_x}{\sqrt{2}}\Big)\Big\}\Big)=\chi\Big(\rho^{\texttt{rank}-2}_{2\texttt{way}D}\Big|\Big\{\frac{1}{2}\Big(\one \pm \frac{\sigma_z + \sigma_x}{\sqrt{2}}\Big)\Big\}\Big)=0$. 
This implies that the simultaneous correlations in two MUBs of the states $\rho^{\texttt{rank}-2}_{1\texttt{way}D}$  and $\rho^{\texttt{rank}-2}_{2\texttt{way}D}$ vanishes, i.e., $Q^\rightarrow_2(\rho^{\texttt{rank}-2}_{1\texttt{way}D})=Q^\rightarrow_2(\rho^{\texttt{rank}-2}_{2\texttt{way}D})=0$.
\end{proof}
The above observation illustrates that a nonzero $Q^{\rightarrow}_2$ is stronger than a nonzero  $D^{\rightarrow}$ since certain discordant states have vanishing $Q^{\rightarrow}_2$.

Next, we consider the separable states of the form (\ref{sep2q}) with $n=4$. Within this class of states, we consider the following subset of states:
\be \label{sepn=4}
\rho_{AB}=\sum^4_{i=1} p_i \rho_i \otimes \ketbra{\psi_i}{\psi_i},
\ee
with $\rho_B=\frac{\openone}{2}$. The QSE of these states is $3$-dimensional \cite{JPJR14}. We have now obtained the following result.
\begin{result}
Any discordant state whose QSE is $3$-dimensional, which has the form given by Eq. (\ref{sepn=4}),  has a nonvanishing $Q^\rightarrow_2$  as well as a nonvanishing  $Q^\rightarrow_3$.  
\end{result}
\begin{proof}
In Ref. \cite{JD23}, it has been shown that if one of the reduced states of any discordant state is maximally mixed, it has superunsteerability in the two-setting scenario. From this, it follows that any discordant state that has the form given by Eq. (\ref{sepn=4}) exhibits 1SSDI steerability implying that it has a nonzero $Q^\rightarrow_2$. Since the QSE of any such state is $3$-dimensional, it also follows that it has simultaneous correlations in three MUBs as quantified by $Q^\rightarrow_3$.  
\end{proof}
An example of a state that satisfies the property of the above result is the Bell-diagonal state with $c_1=c_2=c_3=1/3$,   Before we express this state  $\tau'$ 
in the form given by Eq. (\ref{sepn=4}), we introduce the state $\ket{\theta,\phi}$ and the set $\{\mathcal{Z}\}$ as follows.
Denoting by
    \begin{equation} \label{eq:desc_sep_1}
    \ket{\theta,\phi} := \cos\left(\frac{\theta}{2}\right) \ket{0}
                        + \exp(i \phi)\sin\left(\frac{\theta}{2}\right) \ket{1} \,,
    \end{equation}
an arbitrary pure state in $\mathbb{C}^2$.
Defining ${\mathcal{Z}=\{\ket{0,0},\ket{\theta^*,0},
\ket{\theta^*,\frac{2\pi}{3}},\ket{\theta^*,\frac{4\pi}{3}}\}}$,
with ${\theta^*=\arccos(-\frac{1}{3})}$, the state
$\tau'$, which is the above-mentioned Bell-diagonal state, can be decomposed as follows \cite{BPP15}:
    \begin{equation} \label{eq:desc_sep_2}
    \tau' = \frac{1}{4} \sum_{k=1}^4{Z_k\otimes Z_k} \,,
    \end{equation}
with ${Z_k=\ket{z_k}\bra{z_k}}$ and ${\ket{z_k}\in\mathcal{Z}}$, which has a nonzero discord value as $D^{\rightarrow}( \tau') = 0.3333$  \cite{BPP15}.  For this class of state, Eqs. (\ref{Q2X}) and (\ref{Q3X}) with $c_2=c_3=1/3$ imply operational quantification of simultaneous correlations in two and three MUB via the Schr\"{o}dinger strengths ${\rm SS}_{2}(\tau')=1/3$ and ${\rm SS}_{3}(\tau')=1/3$, respectively.

Finally, we consider the separable states (\ref{sep2q}) with $n=3$. Within this class of states, we consider the following subset of states:
\be \label{sepn=3}
\rho_{AB}=\sum^3_{i=1} p_i \rho_i \otimes \ketbra{\psi_i}{\psi_i},
\ee
with $\rho_B=\frac{\openone}{2}$. The QSE of these states is $2$-dimensional \cite{JPJR14}. We have now obtained the following result.
\begin{result}
Any discordant state whose QSE is $2$-dimensional, which has the form given by Eq. (\ref{sepn=3}),  have a nonvanishing $Q^\rightarrow_2$, on the other hand, they always have $Q^\rightarrow_3=0$.
\end{result}
\begin{proof}
Any discordant state that has the form given by Eq. (\ref{sepn=3}) exhibits 1SSDI steerability since one of its reduced states is maximally mixed \cite{JD23}. Since the QSE of any such state is $2$-dimensional, it follows that it does not have the simultaneous correlations in three MUBs quantified by $Q^\rightarrow_3$.
\end{proof}
An example of the state that satisfies the property of the above result is the Bell-diagonal state with $c_1=c_2=1/2$ and $c_3=0$. Before we express this state $\tau''$ in the form given by Eq. (\ref{sepn=3}),
 define the set ${\mathcal{W}=\{\ket{0,0},\ket{\frac{2\pi}{3},0},\ket{\frac{2\pi}{3},\pi}\}}$ with the state given by Eq. (\ref{eq:desc_sep_1}).
Then the above-mentioned Bell-diagonal state $\tau''$ can be expressed as \cite{BPP15}
    \begin{equation} \label{eq:rk3_max_opt}
    \tau'' = \frac{1}{3}\sum_{i=1}^3{W_k\otimes W_k} \,,
    \end{equation}
with ${W_k=\ket{w_k}\bra{w_k}}$ and ${\ket{w_k}\in\mathcal{W}}$, up to local unitary transformations. It has a nonzero discord value as $D^{\rightarrow}(\tau'') = 0.3113$  \cite{BPP15}. 
 For this class of state, Eqs. (\ref{Q2X}) and (\ref{Q3X}) with $c_2=1/2$ and $c_3=0$ imply operational quantification of simultaneous correlations in two and three MUBs via the Schr\"{o}dinger strengths as given by ${\rm SS}_{2}(\tau'')=1/2$ and ${\rm SS}_{3}(\tau'')=0$, respectively.

We consider another class of separable states (\ref{sep2q}) with $n=3$, which have nonmaximally mixed marginals for both subsystems and have a nonzero discord. These discordant states are considered in Ref. \cite{Gio13} to show that such discordant states are useless for the quantum communication task of remote state preparation  as proposed in Ref. \cite{DLM+12}.  Consider the following one of these states:
\ba \label{ssepQ=0}
\rho_{AB}&=&\frac{1}{4} \Big( \openone \otimes \openone + 0.4 \sigma_1 \otimes \openone +0.4 (\openone \otimes \sigma_1 - \openone \otimes \sigma_3)  \nonumber \\  
&+&0.2 \sigma_3 \otimes \sigma_3  \Big).
\ea
The above state has a nonzero discord value as $D^{\rightarrow}(\rho_{AB}) \approx 2.6 \times 10^{-2}$  \cite{Gio13}. 
In Ref. \cite{JKC+24}, it has been observed that the above state exhibits 1SSDI steerability in the two-setting scenario, but it does not have a nonvanishing Schr\"{o}dinger strength, ${\rm SS}_{2}$.
On the other hand, it has a nonzero $Q^\rightarrow_2$ since correlation in more than one basis is required to demonstrate 1SSDI steerability \cite{JKC+24}.
We have now obtained the following result.
\begin{result}
There are discordant states with a nonvanishing $Q^\rightarrow_2$ which do not have operational quantification of simultaneous correlations in two MUBs via Schr\"{o}dinger strength ${\rm SS}_{2}$.
\end{result}
 From the above result, it follows that the Schr\"{o}dinger strength ${\rm SS}_{2}$ only provides a sufficient quantification of the 1SSDI steerability. On the other hand, from the result in Cor. \ref{corQ21SSDI}, it follows that $Q_2$ can be used to provide a necessary and sufficient quantification of the 1SSDI steerability.
 Note that the state (\ref{eq:rk3_max_opt}) exhibits complete steering in the QSE formalism since Alice's reduced state is maximally mixed, whereas the state (\ref{ssepQ=0}) exhibits incomplete steering. Thus, 1SSDI steerability of the state (\ref{ssepQ=0}) not quantifiable by ${\rm SS}_{2}$ reflects its less steerability over that of the state (\ref{eq:rk3_max_opt}) by having incomplete steering in its QSE.  On the other hand, for any separable state of the form (\ref{sepn=4}), the QSE always has complete steering because it is $3$-dimensional.  
 
In sum, we have obtained a different classification of two-qubit discordant  states as stated below. 
\begin{itemize}
\item Discordant states that do not have a nonzero $Q^\rightarrow_2$. Such discordant states cannot be used to demonstrate 1SSDI steerability.
\item Discordant states that have a nonzero $Q^\rightarrow_2$, but $Q^\rightarrow_3=0$. Such states exhibit  1SSDI steerability and can be further classified into whether it has a nonzero ${\rm SS}_{2}$ or not.
\item Discordant states that have a nonzero $Q^\rightarrow_2$ as well as a nonzero $Q^\rightarrow_3$. Such states exhibiting 1SSDI steerability have a nonzero ${\rm SS}_{2}$ as well as a nonzero ${\rm SS}_{3}$. 
\end{itemize}

\section{Conclusions}\label{sec5} 
In this work, we have identified operational SCMUB in discord based on the measures in Ref. \cite{WMC+14} via 1SSDI steering. First, 
for any bipartite state, we have provided operational characterization of simultaneous correlations in two MUBs in discord as exhibiting 1SSDI steering in a two-setting scenario.

Next, we have focused on a specific class of states. For a broad class of two-qubit states that belongs to the Bell-diagonal states, we have studied the relationships between the quantification of 1SSDI steerability in two and three-setting scenarios and information-theoretic measures of SCMUB defined in Ref. \cite{WMC+14}. Such relationships reveal that the quantification of 1SSDI steerability of Bell-diagonal states provides an operational characterization of the quantification of simultaneous correlations in two and three MUBs of the Bell-diagonal states based on the measures defined in  \cite{WMC+14}. 

Here, it should be noted that in Ref. \cite{JKK+18}, for the Bell-diagonal states that exhibit $1$SDI steerability, the measures of $1$SDI steerability studied in Ref. \cite{CA16} were used to provide an operational characterization of the SCMUB measures defined in Ref. \cite{GW14}. Such measures of SCMUB \cite{GW14} are different versions of the information-theoretic quantification and are stronger than the quantification studied in Ref. \cite{WMC+14} as the former may capture more simultaneous correlations in the given state. 

Finally, we have obtained a different classification of two-qubit discordant states based on the measures of SCMUB in  \cite{WMC+14}. We have also studied its relationship with the quantification of 1SSDI steerability.
 In the case of two-qubit states, the QSE formulation has been used to provide a geometric understanding of our results on the operational quantification of SCMUB.

We solve the following open question mentioned in \cite{WMC+14} on quantifying SCMUB in discord: Which are the free states/operations in the resource theory of SCMUB in discord? We have identified that the discordant states that do not have global coherence cannot be used to demonstrate the SCMUB. This identification of free states implies that the resource theory of SCMUB is different from the resource theory of discord. And, the free operations are any local operation of the form $\Phi_A \otimes \Phi_B$, where $\Phi$ is any CPTP map. However, such free operations are more restrictive than that of entanglement theory.  

We have identified the SCMUB as a quantitative resource for 1SSDI steering in discord. Therefore, studying the quantum key distribution protocols based on quantitative 1SSDI steering in discord would be relevant. It would be interesting to identify the role of quantifying SCMUB in discord or 1SSDI steering for quantum communication tasks such as noisy teleportation, noisy one-way quantum computations, and quantum metrology. It would also be interesting to develop a resource-theoretic characterization of quantifying SCMUB in discord or 1SSDI steering such as distillation of the resource.

\section*{Acknowledgements}

CJ thanks Prof. Wei-Min Zhang for the valuable discussions. 
This work was supported by the National Science
and Technology Council (formerly the Ministry of Science and
Technology), Taiwan  (Grant No. MOST 111-2811-M-006-040-MY2), the National Science and Technology Council, the Ministry of Education (Higher Education Sprout Project NTU-113L104022-1), and the National Center for Theoretical Sciences of Taiwan.
DD acknowledges the Royal Society, United Kingdom for the support through the Newton International Fellowship (NIF$\backslash$R1$\backslash$212007). DD also acknowledges financial support from EPSRC (Engineering \& Physical Sciences Research Council, United Kingdom) Grant No. EP/X009467/1 and STFC (Science and Technology Facilities Council, United Kingdom) Grant No. ST/W006227/1.
 H.- Y. K. is supported by the Ministry of Science and Technology, Taiwan, (with grant number MOST 112-2112-M-003- 020-MY3), and Higher Education Sprout Project of National Taiwan Normal University (NTNU). 
H.-S.G. acknowledges support from the National Science and Technology Council, Taiwan under Grants No. NSTC 113-2112-M-002-022-MY3, No. NSTC 113-2119-M-002 -021 and No. NSTC 114-2119-M-002-017-MY3, from the US Air Force Office of Scientific Research under Award Number FA2386-23-1-4052 and from the National Taiwan University under Grants No. NTU-CC114L895004 and No. NTU-CC113L891604. H.-S.G. is also grateful for the support from the “Center for Advanced Computing and Imaging in Biomedicine (NTU-113L900702)” through The Featured Areas Research Center Program within the framework of the Higher Education Sprout Project by the Ministry of Education (MOE), Taiwan, the support from Taiwan Semiconductor Research Institute (TSRI) through the Joint Developed Project (JDP), and the support from the Physics Division, National Center for Theoretical Sciences, Taiwan.
   
\bibliography{scmub}

\end{document}